\def\R{\mathbb R}
\def\N{\mathbb N}
\def\dis{\displaystyle}
\def\al{\alpha}
\def\la{\lambda}
\def\ga{\gamma}
\def\ep{\varepsilon }
\def\th{\theta}
\def\vphi{\varphi}
\def\wt{\widetilde }
\def\de{\delta}
\def\R{\mathbb R}
\def\N{\mathbb N}
\def\I{\mathbb I}
\def\eu{\, \textrm{e}}
\def\span{\hbox{\rm span}}
\def\dis{\displaystyle}
\def\({\left(}
\def\){\right)}
\DeclareMathOperator{\sech}{sech}
\DeclareMathOperator{\sn}{sn}
\DeclareMathOperator{\dn}{dn}
\DeclareMathOperator{\cn}{cn}
\DeclareMathOperator{\Fix}{Fix}
\DeclareMathOperator{\am}{am}
\begin{document}

\title{Dynamics of generalized PT-symmetric dimers with time periodic gain-loss 
}


\author{F.\ Battelli \and J.\ Dibl\' ik \and M.\ Fe\v{c}kan \and J.\ Pickton \and M.\ Posp\' i\v sil \and H.\ Susanto}


\institute{
F.\ Battelli \at Department of Industrial Engineering and Mathematical Sciences, Marche Polytecnic University,
Via Brecce Bianche 1, 60131 Ancona, Italy\\ \email{battelli@dipmat.univpm.it}
\and
J.\ Dibl\' ik \at Department of Mathematics, Faculty of Electrical Engineering and Communication, Brno University of Technology,
Technick\' a 3058/10, 616 00 Brno, Czech Republic\\ \email{diblik@feec.vutbr.cz} \and
M.\ Fe\v{c}kan \at Department of Mathematical Analysis and Numerical Mathematics, Comenius University, Mlynsk\'a dolina, 842 48 Bratislava, Slovakia\\ \email{Michal.Feckan@fmph.uniba.sk} \and
J.\ Pickton \at School of Mathematical Sciences, University of Nottingham, University Park, Nottingham, NG7 2RD, United Kingdom
\and
M.\ Posp\' i\v sil \at Centre for Research and Utilization of Renewable Energy, Faculty of Electrical Engineering and Communication,
Brno University of Technology, Technick\' a 3058/10, 616 00 Brno, Czech Republic\\ \email{pospisilm@feec.vutbr.cz}
\and
H.\ Susanto \at Department of Mathematical Sciences, University of Essex, Wivenhoe Park,
Colchester, CO4 3SQ, United Kingdom\\ \email{hsusanto@essex.ac.uk}}
\date{Received: date / Accepted: date}

\maketitle

\begin{abstract}

A Parity-Time (PT)-symmetric system with periodically varying-in-time gain and loss modeled by two coupled Schr\"odinger equations (dimer) is studied. It is shown that the problem can be reduced to a perturbed pendulum-like equation. This is done by finding two constants of motion. Firstly, a generalized problem using Melnikov type analysis and topological degree arguments is studied for showing the existence of periodic (libration), shift periodic (rotation), and chaotic solutions. Then these general results are applied to the PT-symmetric dimer. It is interestingly shown that if a sufficient condition is satisfied, then rotation modes, which do not exist in the dimer with constant gain-loss, will persist. An approximate threshold for PT-broken phase corresponding to the disappearance of bounded solutions is also presented. Numerical study is presented accompanying the analytical results.

\keywords{PT-symmetry \and PT-reversibility \and Schr\"odinger equation \and Melnikov function \and perturbation \and chaos}
\end{abstract}

\section{Introduction}\label{intr}

We study the coupled Schr\"odinger equations
\begin{equation}\label{ie1}
\begin{array}{l}
\dis\imath\dot u_1=-u_2-|u_1|^2u_1-\imath\ga(t)u_1\\
\dis\imath\dot u_2=-u_1-|u_2|^2u_2+\imath\ga(t)u_2,
\end{array}
\end{equation}
with a Kerr non-linearity where $\ga(t)$ is a gain-loss parameter. It is easy to see that \eqref{ie1} is invariant under the exchange
\begin{equation}\label{ie1b}
(u_1, u_2, \dot{u}_1, \dot{u}_2, t, \imath) \mapsto (u_2, u_1,  -\dot{u}_2, -\dot{u}_1, -t, -\imath),
\end{equation}
known as Parity-Time (PT) symmetry \cite{bend07}, provided that
$\gamma(t)$ is even. However, for the sake of completeness, in
this paper we consider a general, possibly not even, $\ga(t)$. In
this case, we may consider a formal symmetry in the sense that
under the transformation
$$
(u_1, u_2, \dot{u}_1, \dot{u}_2, t, \imath) \mapsto (u_2, u_1,  \dot{u}_2, \dot{u}_1, -t, -\imath),
$$
the new equations have $(u_1(-t),u_2(-t))$ as a solution whenever
$(u_1(t),u_2(t))$ solves \eqref{ie1}. In this context, the
PT-symmetry \cite{bend07} can be referred to as PT-reversibility.

Equations \eqref{ie1} without nonlinearity with time-in\-de\-pen\-dent $\ga$ were derived in \cite{gana07} as a finite dimensional reduction of mode guiding optical systems with gain and loss regimes \cite{rusc05,klai08}. The following successful experiments \cite{guo09,rute10} have stimulated extensive studies on the PT systems, including the interplay between the symmetry and nonlinearity \cite{li11,duan13,rame10,sukh10,miro11,alex14}. Later, equations similar to \eqref{ie1} were also derived \cite{rodr12} modeling Bose-Einstein condensates with PT-symmetric double-well potentials \cite{wunn12,grae12,heis13,dast13}.

The study of the above systems with equal and constant gain and loss were motivated by a relatively recent postulate in quantum physics that Hamiltonian operators do not have to be self-adjoint (Hermitian) for the potential physical relevance; non-Hermitian operators only need to respect Parity (P) and time-reversal (T) symmetries and the spectrum of the operators can be real if the strength of the anti-Hermitian part, i.e.\ $\gamma\equiv const$ in \eqref{ie1}, does not exceed a critical value \cite{bend98,bend99,bend07}. Above the critical value, the systems are said to be in the PT-broken phase regime.

Note that system \eqref{ie1} with constant $\gamma$ was firstly proposed and studied in \cite{chen92,jorg93,jorg94}. Equation \eqref{ie1} was a particular case of a notably integrable dimer, where it was shown that the general system could be reduced to a first-order differential equation with polynomial nonlinearity and possesses blow-up solutions \cite{jorg93,jorg94}. The solution dynamics of \eqref{ie1} in that case has been studied in \cite{P,kevr13,bara14,bara13}.

Recently, \eqref{ie1} with a gain and loss parameter that is periodically a function of time was considered in \cite{horn13}. By deriving an averaged equation, the rapid modulation in time of the gain and loss profile was shown to provide a controllable expansion of the region of exact PT-symmetry, depending on the strength and frequency of the imposed modulation. In the presence of dispersion along the direction perpendicular to the propagation axis, rapidly modulated gain and loss was shown to stabilize bright solitons \cite{drib11}. The same system featuring a "supersymmetry" when the gain and loss are equal to the inter-core coupling admits a variety of exact solutions (we focus on solitons), which are subject to a specific subexponential instability \cite{drib112}. However, the application of the "management" in the form of periodic simultaneous switch of the sign of the gain, loss, and inter-core coupling, effectively stabilizes solitons, without destroying the supersymmetry \cite{drib112}. A significant broadening of the quasienergy spectrum, leading to a hyperballistic transport regime, due to a time-periodic modulation has also been reported in \cite{vall13}. When the damping and gain factor varies stochastically in time, it was shown in \cite{kono14} that the statistically averaged intensity of the field grows. A different, but similar dimer to \eqref{ie1} involving a parametric (ac) forcing through periodic modulation of the coupling coefficient of the two degrees of freedom was considered recently in \cite{damb13}. It was reported that the combination of the forcing and PT-symmetric potential exhibits dynamical charts featuring tongues of parametric instability, whose shape has some significant differences from that
predicted by the classical theory of the parametric resonance in conservative systems.

In this work, we consider \eqref{ie1} with time-varying $\ga$ whose modulation period is $\mathcal{O}(1)$. Using Melnikov type analysis and topological degree arguments, we prove and derive conditions for the existence of periodic (libration), shift periodic (rotation), and chaotic solutions in the system. Note that the existence of bounded rotation modes for the phase-difference of the dimer wavefunctions is a novel and interesting feature as they cease to exist in the time-independent $\ga$ \cite{P}. We also derive an approximate threshold for PT-broken phase corresponding to the disappearance of bounded solutions. 

The paper is organized as follows. In Section \ref{red}, we reduce the governing equation \eqref{ie1} into
\begin{equation}\label{ee1ee}
\begin{array}{l}
\dis \ddot \psi=4\(c^2\sin\psi+\ga(t)z\),\\
\dis \dot z=\ga(t)\dot\psi,
\end{array}
\end{equation}
which becomes the main object which we study in this paper. In Section \ref{bif}, we consider a more general problem than \eqref{ee1ee} of the form
\begin{equation}\label{ie1e}
\begin{array}{l}
\dis\ddot\psi = f(\psi) +\ga(t)h(z) \\
\dis \dot z=\ga(t)\dot\psi
\end{array}
\end{equation}
where $\ga(t)$ is $1-$periodic and $f(\psi)$, $h(z)$ are specified later. First we consider small forcings (see \eqref{e1}) and we derive Melnikov type conditions for bifurcation of subharmonics, shift-subharmonics and heteroclincs/homo\-clinics. We also discuss the persistence of these solutions when \eqref{ie1e} is reversible symmetric. Then an averaging-like result concerning the existence of periodic solutions is proved when all terms of \eqref{ie1e} are small (see \eqref{e13}). We return back to general \eqref{ie1e} in Section \ref{for} and use topological degree arguments for showing odd-periodic, shift-periodic and anti-periodic solutions under certain symmetric properties of \eqref{ie1e}. In Section \ref{appl}, we apply results of Sections \ref{bif} and \ref{for} to \eqref{ee1ee}. In Section \ref{numer} we present numerical integrations of \eqref{ee1ee} illustrating the analytical results derived in the previous sections.

\section{Model reduction}\label{red}

To simplify \eqref{ie1}, by following \cite{P}, we use polar form
$$
u_j(t)=|u_j(t)|\eu^{\imath\phi_j(t)},\quad j=1,2,
$$
by assuming $u_1(t)u_2(t)\ne0$. Clearly
$$
\dot u_j(t)=\(\frac{d}{dt}|u_j(t)|+\imath|u_j(t)|\frac{d\phi_j(t)}{dt}\)\eu^{\imath\phi_j},\quad j=1,2.
$$
Hence setting $\th(t):=\phi_2(t)-\phi_1(t)$ and separating \eqref{ie1} into the real and imaginary parts, we have
\begin{equation}\label{ie2}
\begin{array}{l}
\dis \frac{d}{dt}|u_1(t)|=-|u_2(t)|\sin\th(t)-\ga(t)|u_1(t)|,\\
\dis \frac{d}{dt}|u_2(t)|=|u_1(t)|\sin\th(t)+\ga(t)|u_2(t)|,\\
 \dis\frac{d}{dt}\th(t)=\(|u_2(t)|^2-|u_1(t)|^2\)\(1-\frac{\cos\th(t)}{|u_1(t)||u_2(t)|}\).
\end{array}
\end{equation}

Then it can be shown
$$
\frac{d}{dt}\(|u_1(t)|^2|u_2(t)|^2-2|u_1(t)||u_2(t)|\cos\th(t)\)=0.
$$
So we obtain
\begin{equation}\label{ie3}
c^4=1+|u_1(t)|^2|u_2(t)|^2-2|u_1(t)||u_2(t)|\cos\th(t).
\end{equation}
Note
$$
\begin{array}{l}
\dis 1+|u_1(t)|^2|u_2(t)|^2-2|u_1(t)||u_2(t)|\cos\th(t)\\
\dis =\(1-|u_1(t)||u_2(t)|\)^2+2(1-\cos\th(t))|u_1(t)||u_2(t)|\ge0.
\end{array}
$$
Hence the constant $c\ge0$ in \eqref{ie3} is well defined. We study $c>0$ in this paper. Note $c\ne1$ implies $u_1(t)u_2(t)\ne0$. Next, using \eqref {ie3}, we introduce a new variable $\psi$ as follows
\begin{equation}\label{ie4}
\begin{array}{l}
\dis c^2\sin\psi(t)=|u_1(t)||u_2(t)|\sin\th(t),\\
\dis c^2\cos\psi(t)=|u_1(t)||u_2(t)|\cos\th(t)-1.
\end{array}
\end{equation}
From \eqref{ie2} and \eqref{ie4}, we obtain
$$
\begin{array}{ll}
\dis \(|u_1(t)|^2-|u_2(t)|^2\)|u_1(t)||u_2(t)|\sin\th(t)\\
\dis =\frac{d}{dt}\(|u_1(t)||u_2(t)|\cos\th(t)\) =\frac{d}{dt}\(c^2\cos\psi(t)\)\\
\dis =-c^2\dot\psi(t)\sin\psi(t)=-\dot\psi(t)|u_1(t)||u_2(t)|\sin\th(t),
\end{array}
$$
which implies
\begin{equation}\label{ie5}
\dot\psi(t)=|u_2(t)|^2-|u_1(t)|^2.
\end{equation}
Differentiating \eqref{ie5} and using \eqref{ie2}, \eqref{ie4}, we arrive at
\begin{equation}\label{ie6}
\begin{array}{l}
\dis \ddot\psi(t)=4|u_1(t)||u_2(t)|\sin\th(t)+2\ga(t)\(|u_1(t)|^2+|u_2(t)|^2\)\\
\dis =4c^2\sin\psi(t)+2\ga(t)\(|u_1(t)|^2+|u_2(t)|^2\).
\end{array}
\end{equation}
On the other hand, using \eqref{ie2} and \eqref{ie5}, we obtain
\begin{equation}\label{ie7}
\begin{array}{l}
\dis \frac{d}{dt}\(|u_1(t)|^2+|u_2(t)|^2\)\\
\dis =2\ga(t)\(|u_2(t)|^2-|u_1(t)|^2\)=2\ga(t)\dot\psi(t).
\end{array}
\end{equation}
Finally setting
$$
z(t):=\frac{|u_1(t)|^2+|u_2(t)|^2}{2},
$$
\eqref{ie6} and \eqref{ie7} give \eqref{ee1ee}.

\begin{remark} Introducing the power $P:=|u_1|^2+|u_2|^2$ and the probability difference between the two waveguides $\Delta:=|u_2|^2-|u_1|^2$, and using \eqref{ie2}, \eqref{ie5}-\eqref{ie7}, instead of \eqref{ee1ee}, we obtain from \eqref{ie1} \cite{P}
\begin{equation}\label{ie8}
\begin{array}{l}
\dis \dot P=2\ga(t)\Delta\\
\dis \dot \Delta=2\ga(t)P+2\sqrt{P^2-\Delta^2}\sin\th\\
\dis \dot \th=\Delta\(1-\frac{2\cos\th}{\sqrt{P^2-\Delta^2}}\),
\end{array}
\end{equation}
which reduces to the common pendulum-like system for wavefunctions in a double-well potential when $\gamma\equiv0$ (see \cite{P} and references therein). Note $P^2-\Delta^2=4|u_1|^2|u_2|^2$, so $P^2>\Delta^2$ if and only $|u_1||u_2|\ne0$. Then
$$
c^4=1+\frac{P^2(t)-\Delta^2(t)}{4}-\sqrt{P^2(t)-\Delta^2(t)}\cos\th(t).
$$
Furthermore, one can compare the first equation of \eqref{ie8} with \eqref{ie5} to obtain
\begin{eqnarray}
\dot{P}=\frac{d}{dt}\left(2\gamma\psi\right)-2\dot{\gamma}\psi,
\label{tamb1}
\end{eqnarray}
which can be integrated to yield
\begin{eqnarray}
P = 2\gamma\psi - 2\int \dot{\gamma}\psi\,dt.\
\label{eq:Ppsi}
\end{eqnarray}
Using equations \eqref{ie4}, \eqref{ie5} and \eqref{eq:Ppsi} in the second equation of \eqref{ie8} will yield the equation
\begin{eqnarray}
\frac{1}{4}\ddot{\psi} = c^2\sin\psi+\gamma ^2\psi - \gamma \int \dot{\gamma}\psi\,dt,
\end{eqnarray}
which becomes \eqref{ee1ee} with $z(t)=\gamma\psi - \int \dot{\gamma}\psi\,dt$.
\end{remark}

\section{General bifurcation results and related topics}\label{bif}

We study the coupled system
\begin{equation}\label{e1}
\begin{array}{l}
\dis\ddot\psi = f(\psi) + \ep\ga(t)h(z) \\
\dis \dot z=\ep\ga(t)\dot\psi
\end{array}
\end{equation}
where $\ga(t)$ is $1-$periodic and $C^2$-smooth, $f(\psi)$, $h(z)$ are $C^2$ and $\ep\in\R$ is small. Set:
\[
F(\psi)=\int_0^\psi f(u) du, \quad H(z) = \int_0^z h(u)du.
\]
From \eqref{e1} it easy to derive the relation
$$
\frac{d}{dt}\left[\dot\psi^2-2F(\psi)-2H(z)\right]=0,
$$
i.e.,
\begin{equation}\label{tamb3}
\dot\psi^2-2F(\psi)-2H(z) = 2d
\end{equation}
for a constant $d\in\R$. Of course, \eqref{tamb3} holds even if $\ga(t)$ is not periodic.
\begin{remark} The same first integral is such for the system
\begin{equation}\label{e1bb}
\begin{array}{l}
\ddot\psi = f(\psi) +\ep\ga(t)h(z)h_1(z)    \\
\dot z = \ep\ga(t)h_1(z)\dot\psi
\end{array}
\end{equation}
where $h_1(z)$ is $C^2$.
\end{remark}

We assume that
\begin{itemize}
\item[$(C_1)$] $H:I\to J$ is a diffeomorphism from the open interval $I\subseteq\R$ into the open interval $J\subseteq\R$, i.e., $h(z)\ne 0$ for any $z\in I$.
\end{itemize}
Then we have:
\begin{equation}\label{e1b}
z=H^{-1}\(\frac{\dot\psi^2}{2}-F(\psi)-d\),
\end{equation}
provided $\frac{\dot\psi^2}{2}-F(\psi)-d\in J$ and \eqref{e1} is reduced to
\begin{equation}\label{e1c}
\ddot\psi = f(\psi) + \ep\ga(t) h\(H^{-1}\(\frac{\dot\psi^2}{2}-F(\psi)-d\)\) .
\end{equation}
This equation must be solved so that its solution satisfies $\frac{\dot\psi^2(t)}{2}-F(\psi(t))-d\in J$ for any $t$ in some interval (or $t\in\R$ if we look for globally defined solutions).

Setting $\ep=0$ into \eqref{e1c} we obtain the unperturbed equation:
\begin{equation}\label{e2}
\ddot \psi=f(\psi).
\end{equation}
We assume that
\begin{itemize}
\item[($H_1$)] equation \eqref{e2} has a family of periodic solutions
\begin{equation}\label{e3}
\psi_\la(t)=\psi_\la(t+T_\la)
\end{equation}
with periods $T_\la\in\R$ and $\dot\psi_\la(t)$ is the unique, up to a multiplicative constant, $T_\la-$periodic solution of the variational equation $\ddot\psi = f'(\psi_\la(t))\psi$.
\end{itemize}
\begin{remark} Assuming that dependence on $\la$ in \eqref{e3} is $C^1$-smooth, from
\eqref{e2} we derive that $\dot\psi_\la(t)$ and $\frac{\partial}{\partial\la}\psi_\la(t)$ solve  $\ddot\psi = f'(\psi_\la(t))\psi$. But from \eqref{e3} we have
$$
\begin{gathered}
\dot\psi_\la(t)=\dot\psi_\la(t+T_\la),\\
\frac{\partial}{\partial\la}\psi_\la(t)=\frac{\partial}{\partial\la}\psi_\la(t+T_\la)+\dot \psi_\la(t+T_\la)\frac{\partial}{\partial\la}T_\la.
\end{gathered}
$$
Hence $\frac{\partial}{\partial\la}\psi_\la(t)$ is $T_\la$-periodic if and only if $\frac{\partial}{\partial\la}T_\la=0$. This observation is well-known but we presented it here for the reader's convenience.
\end{remark}
Next we assume that for some $\la\in\R$ the resonance condition
\begin{itemize}
\item[($R$)] \quad $T_\la = \frac{p}{q}$ for $p,q\in\N$
\end{itemize}
holds. Using \eqref{e2} we see that $c_\la\in\R$ exists so that
\[
\frac{\dot\psi^2_\la(t)}{2}-F(\psi_\la(t))=c_\la .
\]
Then we assume:
\begin{itemize}
\item[($H_2$)] $c_\la-d\in J$.
\end{itemize}
Condition ($H_2$) guarantees that
$$
\frac{\dot\psi_\la^2(t)}{2}-F(\psi_\la(t))-d\in J
$$
for any $t\in\R$.

Now, we follow the standard subharmonic Melnikov method \cite{Ber,Ch,GH} to \eqref{e1c} based on the Lyapunov-Schmidt method, but for reader's convenience we present more details. First, we take Banach spaces
$$
\begin{array}{c}
\dis X:=\left\{\psi\in C^2(\R,\R)\mid \psi(t+T)=\psi(t)\, \forall t\in\R\right\}\\
\dis Y:=\left\{\psi\in C(\R,\R)\mid \psi(t+T)=\psi(t)\, \forall t\in\R\right\}
\end{array}
$$
with the usual maximum norms $\|\psi\|_2$ and $\|\psi\|_0$, respectively, where $T=p=qT_\la$, so that $\psi_\la(t)$ is $T-$periodic. We want to find solutions in $X$ of \eqref{e1c} such that
\begin{equation}\label{e4a}
\frac{\dot\psi^2(t)}{2}-F(\psi(t))-d \in J
\end{equation}
and $\psi(t+T)=\psi(t)$, for any $t\in\R$. We consider the open subset $\Omega\subset X$ of those $\psi\in X$ such that \eqref{e4a} holds. For any $\psi\in\Omega$ we set
\[
\begin{array}{l}
\dis {\cal F}(\psi,\ep)(t) \\
\dis = \ddot\psi - f(\psi) - \ep\ga(t) h\(H^{-1}\(\frac{\dot\psi^2}{2}-F(\psi)-d\)\).
\end{array}
\]
Note that, for any $\al\in\R$ the function $\psi_\la(\cdot +\al):t\mapsto \psi_\la(t+\al)$ satisfies \eqref{e4a} and
\[
{\cal F}(\psi_\la(\cdot + \al),0)(t) = 0 .
\]
Then ${\cal F}(\psi,0)$ vanishes on the one-dimensional manifold ${\cal M} = \{ \psi_\la(\cdot +\al) \mid \al\in\R\}$. The tangent space to ${\cal M}$ at the point  $\psi_\la(\cdot+\al_0)$ is
\[
T_{\psi_\la(\cdot +\al_0)}{\cal M} = \span\{\dot\psi_\la(\cdot +\al_0)\}
\]
and of course the linear map
\[
L:\psi \mapsto \ddot\psi - f'(\psi_\la(t+\al_0))\psi
\]
vanishes on $T_{\psi_\la(\cdot +\al_0)}{\cal M}$. From ($H_1$) it follows that
\[
{\cal N}L=T_{\psi_\la(\cdot +\al_0)}{\cal M}.
\]
Next, it is well known (see \cite{Ch,GH}) that $b(t)\in Y$ belongs to ${\cal R}L$ if and only if
\[
\dis \int_0^T u(t)b(t)\,dt = 0
\]
for any $T-$periodic solution $u(t)$ of $\ddot \psi = f'(\psi_\la(t+\al_0))\psi$. But then, since the space of periodic solutions of this equation is spanned by $\dot\psi_\la(\cdot +\al_0)$, we conclude that the equation
\[
\ddot\psi-f'(\psi_\la(t+\al_0))\psi = b(t)
\]
has a periodic solution (i.e. $b(t)\in{\cal R}L$) if and only if
\[
\int_0^T \dot\psi_\la(t+\al_0)b(t) dt = 0.
\]
Finally,
\[\begin{array}{l}
\dis \frac{\partial}{\partial\ep}{\cal F}(\psi_\la(\cdot+\al_0),0)\\ \dis = -\ga(t) h\(H^{-1}\(\frac{\dot\psi_\la^2(\cdot+\al_0)}{2}-F(\psi_\la(\cdot+\al_0))-d\)\)         \\
\dis = -\ga(t) h(H^{-1}(c_\la-d)).
\end{array}\]
Then, from the well-known Poincar\'e-Melnikov theory we obtain the following result:
\begin{theorem}\label{th1} Suppose that $(H_1)$, $(R)$ and $(H_2)$ hold and that the Melnikov function
\[
M_\la(\al) := \int_0^T \dot\psi_\la(t +\al)\ga(t)dt
\]
has a simple zero at some $\al=\al_0$. Then there exists $\rho>0$ such that for any $\ep$ sufficiently small there exists a unique $\al(\ep)$ ($C^1$ with respect to $\ep$) such that $|\al(\ep)-\al_0|\to 0$ as $\ep\to 0$ and such that equation \eqref{e1c} has a unique $p-$periodic solution $\psi(t,\ep)$ satisfying:
\[
\|\psi(t,\ep)-\psi_\la(t+\al(\ep))\|_2 \le \rho.
\]
Moreover $\|\psi(t,\ep)-\psi_\la(t+\al(\ep))\|_2 \to 0$ as $\ep\to 0$.
\end{theorem}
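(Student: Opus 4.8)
The plan is to complete the Lyapunov--Schmidt reduction already set up above and to reduce the bifurcation equation to a scalar equation controlled by $M_\la$. Since $f$, $h$ and $\ga$ are $C^2$ and, by $(C_1)$, $H^{-1}$ is $C^2$ on $J$, the map ${\cal F}:\Om\times\R\to Y$ is $C^2$. Condition $(H_2)$ together with the openness of $J$ guarantees that $\psi_\la(\cdot+\al)\in\Om$ for all $\al$ and that every $\psi$ sufficiently $C^2$-close to $\psi_\la(\cdot+\al)$ still satisfies \eqref{e4a}, so that the composition with $H^{-1}$ remains well defined on a whole neighbourhood and ${\cal F}$ is genuinely $C^2$ there. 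As recorded above, $L:=D_\psi{\cal F}(\psi_\la(\cdot+\al_0),0)$ is a Fredholm operator of index $0$ with ${\cal N}L=\span\{\dot\psi_\la(\cdot+\al_0)\}$ and ${\cal R}L$ of codimension one, the latter characterised by the orthogonality condition stated just above.

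First I would fix complementary splittings $X=\span\{\dot\psi_\la(\cdot+\al_0)\}\oplus X_1$ and $Y={\cal R}L\oplus Y_1$ with $\dim Y_1=1$, and denote by $P:Y\to{\cal R}L$ the corresponding projection. Parametrising a candidate solution as $\psi=\psi_\la(\cdot+\al)+w$ with $w\in X_1$ and $\al$ near $\al_0$ (a local diffeomorphism onto a neighbourhood of $\psi_\la(\cdot+\al_0)$, since the tangent to ${\cal M}$ at $\al_0$ is exactly the discarded direction), the equation ${\cal F}(\psi_\la(\cdot+\al)+w,\ep)=0$ splits into the auxiliary equation $P{\cal F}=0$ and a one-dimensional bifurcation equation $(I-P){\cal F}=0$. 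Because $L|_{X_1}:X_1\to{\cal R}L$ is an isomorphism and $w=0$ solves $P{\cal F}(\psi_\la(\cdot+\al),0)=0$ for every $\al$, the implicit function theorem supplies a unique $C^2$ map $w=w(\al,\ep)$ near $(\al_0,0)$ with $w(\al,0)\equiv0$.

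It remains to solve the scalar bifurcation equation. Exploiting the translation symmetry of \eqref{e2}, the cokernel direction at $\psi_\la(\cdot+\al)$ is spanned by $\dot\psi_\la(\cdot+\al)$, so I would encode the bifurcation equation as
\[
G(\al,\ep):=\int_0^T\dot\psi_\la(t+\al)\,{\cal F}\(\psi_\la(\cdot+\al)+w(\al,\ep),\ep\)(t)\,dt=0.
\]
Since $w(\al,0)\equiv0$ and ${\cal F}(\psi_\la(\cdot+\al),0)=0$, we have $G(\al,0)\equiv0$, so the implicit function theorem cannot be applied to $G$ directly; this degeneracy is the main obstacle. The standard remedy is to factor out $\ep$: by Hadamard's lemma $\tilde G(\al,\ep):=G(\al,\ep)/\ep$ extends to a $C^1$ function with $\tilde G(\al,0)=\partial_\ep G(\al,0)$. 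Differentiating under the integral at $\ep=0$ and using that $L_\al:=\frac{d^2}{dt^2}-f'(\psi_\la(\cdot+\al))$ is formally self-adjoint on $T$-periodic functions and annihilates $\dot\psi_\la(\cdot+\al)$, the term coming from $\partial_\ep w$ integrates to zero, leaving only the explicit $\ep$-derivative computed before the theorem,
\[
\partial_\ep G(\al,0)=\int_0^T\dot\psi_\la(t+\al)\,\partial_\ep{\cal F}(\psi_\la(\cdot+\al),0)\,dt=-h\(H^{-1}(c_\la-d)\)\,M_\la(\al).
\]

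By $(C_1)$ and $(H_2)$ the scalar $h(H^{-1}(c_\la-d))$ is nonzero, so $\tilde G(\al,0)$ is a nonzero constant multiple of $M_\la(\al)$. Consequently $\tilde G(\al_0,0)=0$ because $\al_0$ is a zero of $M_\la$, while $\partial_\al\tilde G(\al_0,0)=-h(H^{-1}(c_\la-d))M_\la'(\al_0)\ne0$ because the zero is simple. The implicit function theorem applied to $\tilde G$ then yields a unique $C^1$ branch $\al(\ep)$ with $\al(\ep)\to\al_0$ and $\tilde G(\al(\ep),\ep)=0$, hence $G(\al(\ep),\ep)=0$. Setting $\psi(t,\ep)=\psi_\la(t+\al(\ep))+w(\al(\ep),\ep)$ produces the claimed $p$-periodic solution of \eqref{e1c}; the estimate $\|\psi(\cdot,\ep)-\psi_\la(\cdot+\al(\ep))\|_2=\|w(\al(\ep),\ep)\|_2\to0$ follows from $w(\al,0)\equiv0$ and continuity, while the local uniqueness in each of the two implicit-function steps gives uniqueness of $\al(\ep)$ and of the solution within a $\rho$-ball.
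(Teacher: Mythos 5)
Your proposal is correct and follows essentially the same route as the paper: the authors set up exactly this Lyapunov--Schmidt framework (the manifold of translates, the Fredholm characterisation of ${\cal R}L$, and the computation of $\partial_\ep{\cal F}$) and then defer the remaining standard steps to the Poincar\'e--Melnikov references \cite{Ber,Ch,GH}. You have simply filled in those steps -- the auxiliary/bifurcation splitting, the division by $\ep$, and the identification of the reduced equation with $-h(H^{-1}(c_\la-d))M_\la(\al)$ -- correctly.
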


Now suppose
\begin{itemize}
\item[$(C_2)$] $f(\psi+2\pi)=f(\psi)$, $F(\psi+2\pi)=F(\psi)$,
\item[$(H_3)$] equation \eqref{e2} has a family of $T_\la^l-$shift-periodic solutions
\begin{equation}\label{e3bb}
\psi_\la^l(t+T_\la^l)=\psi_\la^l(t)+2\pi
\end{equation}
with shift-periods $T_\la^l\in\R$ and $\dot\psi_\la^l(t)$ is the unique, up to a multiplicative constant, $T_\la^l-$periodic solution of the variational equation $\ddot\psi = f'(\psi_\la^l(t))\psi$.
\end{itemize}
Using again \eqref{e2} we see that $c_\la^l\in\R$ exists so that
\[
\frac{\dot\psi^{l2}_\la(t)}{2}-F(\psi_\la^l(t))=c_\la^l.
\]
Then Theorem \ref{th1} can be directly extended as follows.

\begin{theorem}\label{th1b} Suppose that $(C_2)$, $(H_3)$ and
\begin{itemize}
\item[$(R')$] \quad $T_\la^l = \frac{p}{q}$ for $p,q\in\N$,
\item[$(H_2')$] $c_\la^l-d\in J$
\end{itemize}
hold and that the Melnikov function
\[
M_\la^l(\al) := \int_0^T \dot\psi_\la^l(t +\al)\ga(t)dt
\]
has a simple zero at some $\al=\al_0$. Then there exists $\rho>0$ such that for any $\ep$ sufficiently small there exists a unique $\al^l(\ep)$ ($C^1$ with respect to $\ep$) such that $|\al^l(\ep)-\al_0|\to 0$ as $\ep\to 0$ and such that equation \eqref{e1c} has a unique $p-$shift-periodic solution $\psi(t,\ep)$ satisfying:
\[
\|\psi(t,\ep)-\psi_\la^l(t+\al^l(\ep))\|_2 \le \rho.
\]
Moreover $\|\psi(t,\ep)-\psi_\la^l(t+\al^l(\ep))\|_2 \to 0$ as $\ep\to 0$.
\end{theorem}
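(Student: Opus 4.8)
The plan is to run exactly the same Lyapunov-Schmidt reduction used for Theorem~\ref{th1}, the only genuinely new point being to check that the shift-periodic structure is compatible with that machinery; condition $(C_2)$ is precisely what makes this work. First I would record that, since $\psi_\la^l(t+T_\la^l)=\psi_\la^l(t)+2\pi$ and $(R')$ gives $T_\la^l=p/q$, iterating $q$ times over the full period $T:=p=qT_\la^l$ yields $\psi_\la^l(t+T)=\psi_\la^l(t)+2\pi q$. Thus $\psi_\la^l$ lies in the affine set
$$
\mathcal{A}:=\left\{\psi\in C^2(\R,\R)\mid \psi(t+T)=\psi(t)+2\pi q\ \forall t\in\R\right\},
$$
and any two elements of $\mathcal{A}$ differ by an element of the $T$-periodic space $X$ of Theorem~\ref{th1}. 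I would therefore look for solutions of \eqref{e1c} in $\mathcal{A}$, writing the unknown as the shift-periodic seed $\psi_\la^l(\cdot+\al)$ plus a $T$-periodic correction in $X$.

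The second step is to verify that the nonlinear operator
$$
\mathcal{F}(\psi,\ep)(t)=\ddot\psi - f(\psi) - \ep\ga(t) h\(H^{-1}\(\frac{\dot\psi^2}{2}-F(\psi)-d\)\)
$$
maps $\mathcal{A}$ into the $T$-periodic target space $Y$. Here $(C_2)$ is decisive: for $\psi\in\mathcal{A}$ one has $\ddot\psi(t+T)=\ddot\psi(t)$ and $\dot\psi(t+T)=\dot\psi(t)$, while $f(\psi+2\pi q)=f(\psi)$ and $F(\psi+2\pi q)=F(\psi)$ follow by iterating the $2\pi$-periodicity in $(C_2)$; finally $\ga(t+T)=\ga(t)$ because $T=p\in\N$ and $\ga$ is $1$-periodic. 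Hence $\mathcal{F}(\psi,\ep)\in Y$, and as before $\mathcal{F}(\psi_\la^l(\cdot+\al),0)=0$ for every $\al$, so the solution manifold $\mathcal{M}^l=\{\psi_\la^l(\cdot+\al)\mid\al\in\R\}$ sits inside $\mathcal{A}$ with tangent vectors $\dot\psi_\la^l(\cdot+\al)$, which are $T$-periodic and hence lie in $X$.

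With this in place the reduction is verbatim that of Theorem~\ref{th1}. The linearization $L\phi=\ddot\phi - f'(\psi_\la^l(t+\al_0))\phi$ has $T$-periodic coefficient, since $f'(\psi+2\pi q)=f'(\psi)$ makes $f'(\psi_\la^l(\cdot))$ $T_\la^l$- and hence $T$-periodic, so $L$ acts $X\to Y$; by $(H_3)$ its kernel is exactly $\span\{\dot\psi_\la^l(\cdot+\al_0)\}$, and the Fredholm alternative gives $b\in\mathcal{R}L$ iff $\int_0^T\dot\psi_\la^l(t+\al_0)b(t)\,dt=0$. Using $(H_2')$ to guarantee that $H^{-1}(c_\la^l-d)\in I$ is well defined, one computes
$$
\frac{\partial}{\partial\ep}\mathcal{F}(\psi_\la^l(\cdot+\al_0),0)=-\ga(t)\,h\(H^{-1}(c_\la^l-d)\),
$$
so projecting onto the kernel gives the bifurcation equation $-h(H^{-1}(c_\la^l-d))\,M_\la^l(\al_0)=0$. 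Since $(C_1)$ and $(H_2')$ force the scalar $h(H^{-1}(c_\la^l-d))\neq0$, a simple zero of $M_\la^l$ is a simple zero of the full bifurcation function, and the implicit function theorem produces the $C^1$ branch $\al^l(\ep)$ with $\al^l(\ep)\to\al_0$ and the unique $p$-shift-periodic solution satisfying the stated estimate.

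I expect the only real obstacle to be the bookkeeping in the first two steps: confirming that working on the affine set $\mathcal{A}$ rather than a linear space causes no difficulty, and that $(C_2)$ together with $T=p\in\N$ indeed lands $\mathcal{F}$ in $Y$. Once the operator is recognised to fit the same domain/codomain pair $X\to Y$ after subtracting the seed, every estimate, the characterization of $\mathcal{N}L$ and $\mathcal{R}L$, and the final invocation of the implicit function theorem are identical to the periodic case, so no new analytic input is required.
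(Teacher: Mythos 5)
Your proposal is correct and follows exactly the route the paper intends: the paper gives no separate proof of Theorem~\ref{th1b}, stating only that Theorem~\ref{th1} ``can be directly extended,'' and your argument is precisely that extension, with the useful extra bookkeeping (working on the affine set of shift-periodic functions, using $(C_2)$ and $T=p\in\N$ to land $\mathcal{F}$ in the $T$-periodic target space) spelled out. Nothing is missing; the reduction, the kernel/range characterization, and the bifurcation equation $-h(H^{-1}(c_\la^l-d))M_\la^l(\al)=0$ all match the paper's scheme.
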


Similarly we can study heteroclinic Melnikov bifurcation \cite{BL1,Ch,G,GH} to \eqref{e1}. In this case we assume

\begin{itemize}
\item[$(H_4)$]  the unperturbed equation \eqref{e2} has a heteroclinic orbit $\psi_{\infty}(t)$ such that $\dis\lim_{t\to\pm\infty}\psi_0(t) = \psi_\pm$ and $\psi_\pm$ are hyperbolic saddles of $\ddot\psi=f(\psi)$.
\end{itemize}

Note that if $\psi_+=\psi_-$ then $\psi_{\infty}(t)$ is a homoclinic orbit. Of course the quantity
\[
\frac{\dot\psi^2}{2} -F(\psi)
\]
is constant along $\psi_\infty(t)$. Setting
\[
c_\infty = \frac{\dot\psi_\infty(t)^2}{2} -F(\psi_\infty(t))
\]
we assume that

\begin{itemize}
\item[$(H_5)$] $c_\infty-d\in J$.
\end{itemize}

Now we take functional spaces $X=C_b^2(\R,\R)$ and $Y=C_b(\R,\R)$ and repeating the above arguments, we have the following result.
\begin{theorem}\label{th2}
Under  assumptions $(H_4)$ and $(H_5)$, there exists $\rho>0$ such that equation \eqref{e1c} has unique hyperbolic periodic solutions $u_\pm(t)$ in a neighborhood of $\psi_\pm$ of radius $\rho$. Moreover, if the Melnikov function
$$
M_\infty(\al):=\int_{-\infty}^\infty \dot\psi_\infty(t+\al)\ga(t)dt
$$
has a simple zero at some $\al=\al_0$ then for any $\ep$ sufficiently small there exists a unique $\al(\ep)$ ($C^1$ with respect to $\ep$) such that $|\al(\ep)-\al_0|\to 0$ as $\ep\to 0$ and such that equation \eqref{e1c} has a unique heteroclinic (homoclinic) solution $\psi(t,\ep)$ satisfying:
\[
\|\psi(t,\ep)-\psi_\infty(t+\al(\ep))\|_2 \le \rho.
\]
Moreover $\|\psi(t,\ep)-\psi_\infty(t+\al(\ep))\|_2 \to 0$ as $\ep\to 0$ and
\[
\lim_{t\to\pm\infty} |\psi(t,\ep)-u_\pm(t)| = 0.
\]
\end{theorem}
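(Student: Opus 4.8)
The plan is to reproduce the Lyapunov--Schmidt scheme of Theorem~\ref{th1} on the bounded-function spaces $X=C_b^2(\R,\R)$, $Y=C_b(\R,\R)$, with exponential dichotomies replacing Floquet theory. As a preliminary I would establish the persistence of the saddles: near $\psi_\pm$ one has $\frac{\dot\psi^2}{2}-F(\psi)-d\to c_\infty-d\in J$ by $(H_5)$, so \eqref{e1c} is well defined there and reduces, to leading order, to the autonomous equation $\ddot\psi=f(\psi)$ under the small $1$-periodic forcing $\ep\ga(t)h(H^{-1}(c_\infty-d))$. Since $\psi_\pm$ are hyperbolic saddles (so $f'(\psi_\pm)>0$), standard persistence of hyperbolic equilibria under time-periodic perturbation---the implicit function theorem for the fixed point of the period map, together with roughness of the exponential dichotomy of the linearisation---yields, for $\ep$ small, unique hyperbolic $1$-periodic solutions $u_\pm(t)$ within distance $\rho$ of $\psi_\pm$.

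For the heteroclinic part I keep ${\cal F}(\psi,\ep)$ as before and linearise at $\psi_\infty(\cdot+\al_0)$, obtaining $L:\psi\mapsto\ddot\psi-f'(\psi_\infty(t+\al_0))\psi$; the $\ep$-derivative of ${\cal F}$ at the orbit is again $-\ga(t)h(H^{-1}(c_\infty-d))$, exactly as computed before Theorem~\ref{th1}. The crux is the Fredholm theory of $L$ on $C_b^2\to C_b$. Because $f'(\psi_\infty(t))\to f'(\psi_\pm)>0$ as $t\to\pm\infty$, the variational equation has exponential dichotomies on both half-lines with equal stable and unstable dimensions, so $L$ is Fredholm of index $0$; nondegeneracy of the connection makes ${\cal N}L=\span\{\dot\psi_\infty(\cdot+\al_0)\}$, the unique exponentially decaying solution; and since the scalar operator $\frac{d^2}{dt^2}-f'(\psi_\infty(t+\al_0))$ is formally self-adjoint, the Fredholm alternative reads $b\in{\cal R}L$ iff $\int_{-\infty}^\infty\dot\psi_\infty(t+\al_0)\,b(t)\,dt=0$.

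The reduction is then routine. Splitting off ${\cal N}L$ and solving the range component by the implicit function theorem produces a correction $v(\al,\ep)$ with $v(\al,0)=0$, while the surviving one-dimensional bifurcation equation becomes, after division by $\ep$,
\[
-h\bigl(H^{-1}(c_\infty-d)\bigr)\,M_\infty(\al)+O(\ep)=0 .
\]
By $(C_1)$ and $(H_5)$ the prefactor $h(H^{-1}(c_\infty-d))$ is nonzero, so a simple zero of $M_\infty$ at $\al_0$ is a simple zero of the full bifurcation function; the implicit function theorem delivers a unique $C^1$ branch $\al(\ep)\to\al_0$ and the bounded solution $\psi(t,\ep)=\psi_\infty(t+\al(\ep))+v(\al(\ep),\ep)$, with $\|\psi(\cdot,\ep)-\psi_\infty(\cdot+\al(\ep))\|_2\to0$. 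Because $\psi(\cdot,\ep)$ is bounded and hyperbolicity confines the bounded solutions near $\psi_\pm$ to the local stable/unstable manifolds of $u_\pm$, one obtains $|\psi(t,\ep)-u_\pm(t)|\to0$ as $t\to\pm\infty$; the homoclinic case $\psi_+=\psi_-$ is identical.

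The step I expect to be hardest is the Fredholm/dichotomy analysis on the whole line: verifying index $0$, the one-dimensional kernel, and the orthogonality characterisation of ${\cal R}L$ in the absence of Floquet theory, and then transferring hyperbolicity from $\psi_\pm$ to $u_\pm$ so as to pin down the asymptotics $\psi(t,\ep)\to u_\pm(t)$. Everything downstream of that---the implicit-function solution of the range equation and the one-dimensional Melnikov bifurcation---is a direct transcription of the periodic argument.
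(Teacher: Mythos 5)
Your proposal follows essentially the same route as the paper, which itself gives no detailed proof but simply states that one takes $X=C_b^2(\R,\R)$, $Y=C_b(\R,\R)$ and repeats the Lyapunov--Schmidt/Melnikov argument of Theorem~\ref{th1}, citing the standard heteroclinic-bifurcation references. Your filling-in of the exponential-dichotomy/Fredholm details, the nonvanishing prefactor $h(H^{-1}(c_\infty-d))$, and the persistence of the saddles as hyperbolic periodic orbits is exactly the intended (and correct) elaboration.
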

\begin{remark}\label{rr1}
As a matter of fact, we have $1$-parametric families of subharmonic, shift-subharmonics and bounded solutions parameterized by $d$. Moreover, by \cite[p. 197]{GH}, these bounded solutions are accumulated with families of subharmonics, shift-subharmonics from Theorem \ref{th1}. Furthermore, Theorem \ref{th2} implies the existence of chaos near these bounded solutions creating heteroclinic cycles (see \cite{BF1} for more details).
Now, $\ga(t)$ can be almost-periodic.
\end{remark}

\begin{remark}\label{rr2}
Note that
\begin{eqnarray*}
&&\int_0^TM_\la(\al)d\al=\int_0^T\int_0^T\dot{\psi}_\la(t+\al)\ga(t)dtd\al\\&=&\int_0^T\ga(t)\(\int_0^T\dot{\psi}_\la(t+\al)d\al\)dt\\
&=&\int_0^T\ga(t)\(\psi_\la(t+T)-\psi_\la(t)\)dt\\&=&\int_0^{qT_\la}\ga(t)\(\psi_\la(t+qT_\la)-\psi_\la(t)\)dt=0,\\
&&\int_0^TM_\la^l(\al)d\al=\int_0^{qT_\la^l}\ga(t)\(\psi_\la^l(t+qT_\la^l)-\psi_\la(t)\)dt\\&=&2\pi q\int_0^{T}\ga(t)dt,\\
&&\int_0^1M_\infty(\al)d\al=\int_{-\infty}^{\infty}\ga(t)\(\psi_\infty(t+1)-\psi_\infty(t)\)dt\\
&=&\int_{-\infty}^{\infty}\ga(t-1)\psi_\infty(t)dt-\int_{-\infty}^\infty\ga(t)\psi_\infty(t)dt=0.
\end{eqnarray*}
So if either the subharmonic Melnikov function $M_\la(\al)$ is not identically zero or $\int_0^{T}\ga(t)dt=0$ and the shift-subharmonic Melnikov function $M_\la^l(\al)$ is not identically zero then they change signs over $[0,T]$ and hence we have existence (but not uniqueness) results similar to Theorems \ref{th1} and \ref{th1b}. The same argument holds for the heteroclinic Melnikov function $M_\infty(\al)$ from Theorem \ref{th2} with the corresponding chaotic behavior (see \cite{BF1} for more details).
\end{remark}

\begin{remark} \label{rr3}
Writing \eqref{e1c} as a system
\begin{equation}\label{e11}
\dot\xi=f(\xi,t,\ep),\quad \xi=(\psi,\th),
\end{equation}
for
\[
f(\xi,t,\ep) := \begin{pmatrix} \theta \\ f(\psi) +\ep\ga(t) h\(H^{-1}\(\frac{\theta^2}{2}-F(\psi)-d\) \)
\end{pmatrix}
\]
we see that, when $\ga(-t)=\ga(t)$, i.e., when $\ga(t)$ is even,
\begin{equation}\label{e12}
R_1f(\xi,t,\ep)=-f(R_1\xi,-t,\ep)
\end{equation}
for the reflection $R_1(\psi,\th)=(\psi,-\th)$. Condition \eqref{e12} means, that \eqref{e11} is $R_1$-reversible \cite{De,F2,F3,WF}. We may suppose that the periodic solutions in \eqref{e3} satisfy $\dot\psi_\la(0)=0$. Then setting $\xi_\la(t)=(\psi_\la(t),\dot\psi_\la(t))$, all $\xi_\la(t)$ intersect transversally the fixed point set $\Fix R_1=\{\xi\in\R^2\mid R_1\xi=\xi\}=\{(\vphi,0)\}$ in 2 points.

So each of these periodics is $R_1$-symmetric, i.e., they satisfy $\xi_\la(t)=R_1\xi_\la(-t)$ since $\xi_\la(0)\in\Fix R_1$. So they persist in \eqref{e11} for $\ep$ small when $\ga(t)$ is even. Furthermore, if $\psi_\infty(t)$ is homoclinic then we may again suppose that $\dot\psi_\infty(0)=0$. Hence $\xi_\infty(t):=(\psi_\infty(t),\dot\psi_\infty(t))$ intersects transversally the fixed point set $\Fix R_1$ in one point. So, it is $R_1$-symmetric and it persists in \eqref{e11} for $\ep$ small. Moreover there is an accumulation of these $R_1$-symmetric periodics on the $R_1$-symmetric homoclinic.

Next supposing that $f(\psi)$ and $\ga(t)$ are odd, i.e., $f(\psi)=-f(-\psi)$ and $\ga(t)=-\ga(-t)$, and taking the reflection $R_2(\vphi,\th)=(-\vphi,\th)$, we see that \eqref{e11} is $R_2$-reversible. Note $F(-\psi)=F(\psi)$, i.e., $F(\psi)$ is even. Then if the periodic solutions \eqref{e3} and heteroclinic cycle $\{\xi_\infty(t)\mid t\in\R\}\cup\{R_2\xi_\infty(t)\mid t\in\R\}$ intersect transversally the fixed point set $\Fix R_2=\{(0,\th)\}$,  that is, if each of these orbits is $R_2$-symmetric, they persist in \eqref{e11} for $\ep$ small when $\ga(t)$ is odd. Of course, here we suppose that $\xi_\la(0),\xi_\infty(0)\in\Fix R_2$. Moreover, there is again an accumulation of $R_2$-symmetric periodic orbits on the $R_2$-symmetric heteroclinic cycle. Finally, if in addition $(C_2)$ and $(H_3)$ hold then we have the same persistence and accumulation results for shift-subharmonics.
\end{remark}

\medskip
Finally, instead of \eqref{e1} we consider the equation
\begin{equation}\label{e13}
\begin{array}{l}
\dis \ddot\psi = \ep f(\psi,\ep)+\ep\ga(t)h(z)  \\
\dis \dot z=\ep\ga(t)\dot\psi
\end{array}
\end{equation}
where $\ga(t)$ is $1-$periodic and $C^2$-smooth, $\ep>0$ is small and $f(\psi,\ep)$ is again $C^2$-smooth. Arguing as before, \eqref{e13} is reduced to (see \eqref{e1c})
\begin{equation}\label{e14}
\ddot\psi=\ep f(\psi,\ep) + \ep\ga(t)h\(H^{-1}\(\frac{\dot\psi^2}{2}-\ep F(\psi)-d\)\).
\end{equation}
Note that, for $\ep=0$ equation \eqref{e14} has the $1-$periodic solutions $\psi(t)=\al$. Suppose that \eqref{e14} has a $1-$periodic solution for any $\ep>0$ small uniformly bounded by $M$, so $\|\psi\|_2\le M$. Since $\int_0^1\dot\psi(t)dt=0$, there is an $t_0\in[0,1)$ such that $\dot\psi(t_0)=0$. Then we derive
\begin{equation}\label{e15a}
\begin{array}{l}
\dis\dot\psi(t)= \ep \int_{t_0}^t f(\psi(s),\ep)\\
\dis+ \ga(s) h \(H^{-1}\(\frac{\dot\psi(s)^2}{2}-\ep F(\psi(s))-d\)\) ds
\end{array}
\end{equation}
and hence taking absolute values and using the boundedness of $\psi(t)$, and the smoothness of $f$, $h$ and $H^{-1}$ we see that
\[
\|\dot\psi\|_0=O(\ep)
\]
but then $-d\in \bar J$. We suppose (see $(H_2)$)
\begin{equation}\label{c3}
-d\in J.
\end{equation}
Next we note that the equation
\[
\ddot\psi=b(t)=b(t+1)
\]
has a $1-$periodic solution if and only if $\int_0^1 b(t)dt=0$. So, repeating the above method for \eqref{e15a} with the splitting
$$
\begin{array}{c}
\dis Y=Y_1\oplus Y_2,\quad X=X_1\oplus X_2,\\
\dis Y_1:=\span\{1\},\quad Y_2:=\left\{\psi\in Y\Big| \int_0^1\psi(t)dt=0\right\}\\
\dis X_1:=Y_1,\quad X_2:=X\cap Y_2,\end{array}
$$
and the projections $P : Y\to Y_1$ and $Q=\I-P : Y\to Y_2$ defined as $P\psi:= \int_0^1\psi(t)dt$, we get the next result related to averaging theory \cite{GH,SVM}.
\begin{theorem}\label{thaver}
Suppose \eqref{c3} and the Melnikov-like function
\begin{equation}\label{e16}
M_{aver}(\al):=f(\al,0)+h(H^{-1}(-d))\int_0^1\ga(t)dt
\end{equation}
has a simple zero at some $\al_0\in\R$.  Then equation \eqref{e14} has a unique $1-$periodic solution $\psi(t)=\al_0+O(\ep)$ for any $\ep$ sufficiently small.
\end{theorem}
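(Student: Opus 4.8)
The plan is to carry out the Lyapunov--Schmidt reduction already set up, splitting the unknown $\psi\in X$ according to $X=X_1\oplus X_2$ as $\psi=\al+w$ with $\al\in X_1=\span\{1\}\cong\R$ and $w\in X_2$, so that $\int_0^1 w(t)\,dt=0$ and $\dot\psi=\dot w$. Writing \eqref{e14} as $\ddot\psi=\ep G(\psi,\dot\psi,t,\ep)$ with
\[
G(\psi,\dot\psi,t,\ep):=f(\psi,\ep)+\ga(t)h\(H^{-1}\(\tfrac{\dot\psi^2}{2}-\ep F(\psi)-d\)\),
\]
I would apply the two projections $P$ and $Q$ separately.

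First, the range equation $Q[\ddot\psi-\ep G]=0$. Since $\ddot\psi=\ddot w$ and the map $w\mapsto\ddot w$ is an isomorphism $X_2\to Y_2$ --- its inverse $K:Y_2\to X_2$ being well defined because $\ddot w=b$ with $b\in Y_2$ has a unique solution in $X_2$, by the fact recalled just before the theorem --- this equation reads $w=\ep\,KQ\,G(\al+w,\dot w,\cdot,\ep)$. Condition \eqref{c3} guarantees $-d\in J$ with $J$ open, so for $w$ in a small ball and $\ep$ small the argument $\tfrac{\dot w^2}{2}-\ep F(\al+w)-d$ stays in $J$ and $G$ is well defined and $C^1$. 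The right-hand side is then a contraction in $w$ with Lipschitz constant $O(\ep)$, uniformly for $\al$ in any bounded set, so it has a unique fixed point $w=w(\al,\ep)$ with $\|w\|_2=O(\ep)$; the implicit function theorem gives its $C^1$ dependence on $(\al,\ep)$, and in particular $w(\al,0)=0$, $\dot w=O(\ep)$.

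Second, the bifurcation equation $P[\ddot\psi-\ep G]=0$. Because $P\ddot\psi=\int_0^1\ddot\psi\,dt=\dot\psi(1)-\dot\psi(0)=0$ by periodicity, this equation is $\ep\int_0^1 G\,dt=0$, equivalent for $\ep\ne0$ to $B(\al,\ep)=0$, where
\[
B(\al,\ep):=\int_0^1 G(\al+w(\al,\ep),\dot w(\al,\ep),t,\ep)\,dt .
\]
The map $B$ is $C^1$ near $(\al_0,0)$ by the $C^1$ dependence of $w$, and since $w(\al,0)=0$ we get directly $B(\al,0)=f(\al,0)+h(H^{-1}(-d))\int_0^1\ga(t)\,dt=M_{aver}(\al)$. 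Hence $B(\al_0,0)=0$ and $\partial_\al B(\al_0,0)=M_{aver}'(\al_0)\ne0$ by the simple-zero hypothesis, so a final application of the implicit function theorem produces a unique $\al(\ep)$ near $\al_0$ with $B(\al(\ep),\ep)=0$ and $\al(\ep)\to\al_0$ as $\ep\to0$. The desired solution is $\psi(t,\ep)=\al(\ep)+w(\al(\ep),\ep)=\al_0+O(\ep)$, and uniqueness is inherited from the uniqueness in both implicit-function steps.

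The main obstacle I expect is the uniformity and $C^1$-regularity underlying the reduced map $B$: one must check that the fixed point $w(\al,\ep)$ and its $t$-derivative are $O(\ep)$ uniformly for $\al$ in a fixed neighborhood of $\al_0$, and that differentiation in $\al$ commutes with the evaluation at $\ep=0$, so that $\partial_\al B(\cdot,0)=M_{aver}'$. This hinges on the smoothness of $f$, $h$, $H^{-1}$ together with the open-interval condition \eqref{c3}, which keeps the argument of $H^{-1}$ inside $J$ throughout the contraction; everything else is the routine bookkeeping of the standard averaging/Lyapunov--Schmidt scheme.
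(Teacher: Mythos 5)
Your proposal is correct and follows essentially the same route as the paper, which only sketches the argument by pointing to the splitting $Y=Y_1\oplus Y_2$, $X=X_1\oplus X_2$ with the projections $P\psi=\int_0^1\psi(t)\,dt$ and $Q=I-P$ and invoking the Lyapunov--Schmidt scheme already used for Theorem \ref{th1}; you simply carry out that reduction in detail (range equation by contraction giving $w=O(\ep)$, bifurcation equation reducing at $\ep=0$ to $M_{aver}$, then the implicit function theorem at the simple zero). The points you flag as needing care --- keeping the argument of $H^{-1}$ inside $J$ via \eqref{c3} and the uniform $C^1$ dependence of the fixed point --- are exactly the ingredients the paper prepares with the a priori estimate $\|\dot\psi\|_0=O(\ep)$ preceding the theorem.
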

\section{Forced anti-symmetric periodic solutions}\label{for}
In this section we consider the problem of existence of periodic solutions of the differential equation \eqref{ie1e}. As we apply topological degree arguments \cite{Ch,F2,M}, we introduce a new parameter $\la\in[0,1]$, changing $f(\psi)$ with $\la f(\psi)$, and $\ga(t)$ with $\la\ga(t)$:
\begin{equation}\label{ee1}
\ddot \psi = \la f(\psi) + \la\ga(t)h\(H^{-1}\(\frac{\dot\psi^2}{2}-\la F(\psi)-d\)\).
\end{equation}
Note that, for $\la=1$ we have equation \eqref{e1c} with $\ep=1$ and with $\la=0$ we obtain the equation $\ddot \psi = 0$.

For any $w\in J$ we have:
\[
\frac{d}{dw} H^{-1}(w) = \frac{1}{H'(H^{-1}(w))} = \frac{1}{h(H^{-1}(w))}
\]
hence from equation \eqref{ee1} we obtain:
\begin{equation}\label{e5}
\frac{d}{dt}H^{-1}\(\frac{\dot\psi(t)^2}{2}-\la F(\psi(t))-d\) = \la\ga(t)\dot\psi(t),
\end{equation}
which implies
$$
\begin{array}{l}
\dis H^{-1}\(\frac{\dot\psi(t)^2}{2}-\la F(\psi(t))-d\) \\
\dis = H^{-1}\(\frac{\dot\psi(t_0)^2}{2}-\la F(\psi(t_0))-d\) + \la\int_{t_0}^t\ga(s)\dot\psi(s) ds
\end{array}
$$
that is
\begin{equation}\label{e5a}
\begin{array}{l}
\dis \frac{\dot\psi(t)^2}{2}-\la F(\psi(t))\\ \dis
= d + H\( H^{-1}\(\frac{\dot\psi(t_0)^2}{2}-\la F(\psi(t_0))-d\)\right.\\ \dis
 \left.+ \la\int_{t_0}^t\ga(s)\dot\psi(s) ds \).
\end{array}
\end{equation}

To proceed we recall the following Bihari's inequality \cite{Bi}.

\begin{theorem}\label{bih}
If $w(t)$ is a nonnegative continuous function such that:
\[
w(t) \le \al + \int_0^t f(s)g(w(s)) ds
\]
with a constant $\al>0$, $g:[0,\infty[\to [0,\infty[$ nondecreasing continuous with $g(u)>0$ if $u>0$, and $f(t)$ is continuous with $f(t)\ge 0$ for any $t\ge 0$ then, with $G(x)=\int_1^x \frac{1}{g(u)}du$,
\[
w(t) \le G^{-1}\( G(\al) + \int_0^t f(s)\,ds\)
\]
for all $t\ge 0$ for which $G(\al) + \int_0^t f(s)\,ds$ belongs to the domain of $G^{-1}$.
\end{theorem}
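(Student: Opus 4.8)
The plan is to prove this by the classical comparison-function technique, which converts the integral inequality into a differential one that can be integrated explicitly through the substitution $G$; this is exactly the device that generalizes Gronwall's inequality to nonlinear $g$. First I would introduce the majorant
\[
v(t) := \al + \int_0^t f(s) g(w(s))\, ds,
\]
so that by hypothesis $w(t) \le v(t)$ for all $t \ge 0$ and $v(0) = \al > 0$. Since $f$, $g$ and $w$ are continuous, the integrand is continuous, so $v$ is $C^1$ with $v'(t) = f(t) g(w(t))$. Because $f \ge 0$ and $g \ge 0$, the function $v$ is nondecreasing, whence $v(t) \ge \al > 0$ for every $t \ge 0$; this positivity is precisely what keeps the argument of $g$ bounded away from the point where $g$ might vanish.

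Next I would turn the pointwise bound $w(t) \le v(t)$ into a differential inequality. Using that $g$ is nondecreasing together with $f \ge 0$,
\[
v'(t) = f(t) g(w(t)) \le f(t) g(v(t)),
\]
and dividing by $g(v(t)) > 0$ gives $v'(t)/g(v(t)) \le f(t)$. The key observation is that, since $G'(x) = 1/g(x)$, the chain rule yields $\frac{d}{dt} G(v(t)) = v'(t)/g(v(t))$, so that
\[
\frac{d}{dt} G(v(t)) \le f(t).
\]

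I would then integrate this from $0$ to $t$ and invert $G$. Integration gives $G(v(t)) - G(\al) \le \int_0^t f(s)\, ds$, that is
\[
G(v(t)) \le G(\al) + \int_0^t f(s)\, ds.
\]
Because $G' = 1/g > 0$ on $(0,\infty)$, the map $G$ is strictly increasing and so is $G^{-1}$; applying $G^{-1}$ to both sides yields $v(t) \le G^{-1}\!\left(G(\al) + \int_0^t f(s)\, ds\right)$, and combining with $w(t) \le v(t)$ finishes the proof.

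The one point requiring genuine care — and the only real obstacle — is keeping every manipulation inside the region where $g$ is positive and $G$ is invertible. Concretely, one must verify that $v(t) \ge \al > 0$ so that the division by $g(v(t))$ is legitimate, and one must restrict the final inversion to those $t$ for which $G(\al) + \int_0^t f(s)\, ds$ lies in the domain of $G^{-1}$, which is exactly the admissibility condition stated in the theorem. Everything else is a routine application of the fundamental theorem of calculus together with the monotonicity of $g$ and $G$.
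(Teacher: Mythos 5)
Your proof is correct: the majorant $v$, the differential inequality $v'/g(v)\le f$ justified by $v\ge\al>0$, the identity $\frac{d}{dt}G(v(t))=v'(t)/g(v(t))$, and the monotone inversion of $G$ constitute the standard proof of Bihari's inequality, and you correctly isolate the only delicate points (positivity of $v$ and the domain restriction for $G^{-1}$). The paper itself gives no proof of this statement --- it is recalled as a known result with a citation to Bihari's original article --- so there is nothing to compare against beyond noting that your argument is the classical one.
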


Suppose that \eqref{e5}, or \eqref{e5a}, has a $1-$periodic or a homoclinic, solution $\psi(t)$. Then at some point $t_0$ it must result $\dot\psi(t_0)=0$. Motivated by this, we prove the following.
\begin{lemma}\label{lem2.1}
Let $\Gamma:=\int_0^1|\ga(s)|ds$ and assume the following conditions hold:
\begin{itemize}
\item[i)] $|F(\psi)|\le M$ for any $\psi\in \R$,
\item[ii)] $|H(z)|\le g(|z|)$  for any $z\in \R$,
\item[iii)] $[-d-\la\sup_{\psi\in\R}F(\psi), -d-\la\inf_{\psi\in\R}F(\psi)]\subset J$ for any $\la\in[0,1]$,
\item[iv)] $\dis\Gamma< \int_{N+\Gamma\sqrt{2(M+|d|)}}^\infty\frac{du}{\sqrt{2g(u)}}$,
\end{itemize}
where
\[
N := \sup\{ |H^{-1}(-\la F(\psi)-d)| \mid \psi\in\R,\;\la\in[0,1]\},
\]
and $g:[0,\infty[\to[0,\infty[$ is a nondecreasing continuous function satisfying $g(r)>0$ for $r>0$. Then there exists a positive constant $M_1$ such that for any $\la\in[0,1]$, any $1-$periodic solution $\psi(t)$ of equation \eqref{e5a} whose derivative vanishes at $t_0\in [0,1[$ satisfies:
\[
\|\dot\psi\|_0 \le M_1.
\]
\end{lemma}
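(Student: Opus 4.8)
The plan is to turn the already-derived integrated form \eqref{e5}--\eqref{e5a} into a scalar Bihari inequality for a single auxiliary unknown. Introduce
\[
\Phi(t):=H^{-1}\left(\frac{\dot\psi(t)^2}{2}-\la F(\psi(t))-d\right),
\]
which is well defined by (iii) and, by \eqref{e5}, solves $\dot\Phi=\la\ga(t)\dot\psi$. Since by hypothesis $\dot\psi(t_0)=0$, at the base point $\Phi(t_0)=H^{-1}(-\la F(\psi(t_0))-d)$, so $|\Phi(t_0)|\le N$ directly from the definition of $N$. Everything will then be reduced to bounding $|\Phi|$.

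First I would extract a pointwise bound on $\dot\psi$ in terms of $\Phi$. Rewriting the definition of $\Phi$ gives $\frac{\dot\psi(t)^2}{2}=\la F(\psi(t))+d+H(\Phi(t))$, whence by (i), (ii) and $\la\le1$,
\[
|\dot\psi(t)|\le\sqrt{2\left(M+|d|+g(|\Phi(t)|)\right)}\le\sqrt{2(M+|d|)}+\sqrt{2g(|\Phi(t)|)},
\]
the last step using $\sqrt{a+b}\le\sqrt a+\sqrt b$. This subadditive split is precisely what produces the lower limit $N+\Gamma\sqrt{2(M+|d|)}$ appearing in (iv), and it is the small trick on which the whole estimate turns.

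Next I would close the loop by integrating $\dot\Phi$. Because $\psi$ and $\ga$ are $1$-periodic it suffices to control $t$ over one full period $[t_0,t_0+1]$, on which $\int_{t_0}^t|\ga(s)|\,ds\le\int_0^1|\ga(s)|\,ds=\Gamma$. Integrating $\dot\Phi=\la\ga\dot\psi$ from $t_0$, taking absolute values, and inserting the bound of the previous step yields
\[
|\Phi(t)|\le \left(N+\Gamma\sqrt{2(M+|d|)}\right)+\int_{t_0}^t|\ga(s)|\sqrt{2g(|\Phi(s)|)}\,ds.
\]
This is exactly the hypothesis of Bihari's inequality (Theorem \ref{bih}) with constant $\al=N+\Gamma\sqrt{2(M+|d|)}$, kernel $f=|\ga|$ and nondecreasing function $\hat g(u)=\sqrt{2g(u)}$. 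Writing $\hat G(x)=\int_1^x du/\sqrt{2g(u)}$, Theorem \ref{bih} gives $|\Phi(t)|\le\hat G^{-1}(\hat G(\al)+\Gamma)$, and the requirement that $\hat G(\al)+\Gamma$ lie in the domain of $\hat G^{-1}$ reads $\Gamma<\int_\al^\infty du/\sqrt{2g(u)}$, which is precisely condition (iv). Hence $|\Phi(t)|\le C:=\hat G^{-1}(\hat G(\al)+\Gamma)<\infty$.

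Feeding $|\Phi(t)|\le C$ back into the pointwise bound of the second step gives $\|\dot\psi\|_0\le\sqrt{2(M+|d|+g(C))}=:M_1$. Since $M$, $N$, $\Gamma$, $d$ and $g$ are all independent of $\la$, the constant $M_1$ is uniform in $\la\in[0,1]$, as required. The main obstacle I expect is essentially bookkeeping: reducing the coupled system to the single scalar unknown $|\Phi(t)|$, absorbing the sign of $\ga$ into absolute values, confirming via $1$-periodicity that the accumulated kernel never exceeds $\Gamma$ over a period, and matching the resulting domain-of-definition condition exactly to the form of (iv). A minor technical point is the degenerate case $\al=0$, where the strict positivity required by Theorem \ref{bih} fails; this is handled by replacing $\al$ with $\al+\de$ and letting $\de\to0^+$, or simply by noting that $N>0$ in the nondegenerate situations of interest.
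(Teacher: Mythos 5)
Your proof is correct and follows essentially the same route as the paper: the paper applies Bihari's inequality to the majorant $w(t)=N+\int_{t_0}^t|\ga(s)\dot\psi(s)|\,ds$ rather than to $|\Phi(t)|$ itself, but this yields the identical integral inequality with the same constant $N+\Gamma\sqrt{2(M+|d|)}$, the same kernel $|\ga|\sqrt{2g(\cdot)}$, and the same final formula for $M_1$. The only (harmless) addition on your side is the explicit treatment of the degenerate case $\al=0$ in Bihari's theorem, which the paper leaves implicit.
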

\begin{proof}
From equation \eqref{e5a} we get:
\[\begin{array}{l}
\dis\frac{1}{2}\dot \psi(t)^2 \le |d|+M \\
\dis + g\(\left |H^{-1}\(-\la F(\psi(t_0))-d\)+\la\int_{t_0}^t\ga(s)\dot\psi(s) ds\right |\)    \\
\dis \le |d|+M \\
\dis + g\(\left |H^{-1}\(-\la F(\psi(t_0))-d\)\right | + \int_{t_0}^t|\ga(s)\dot\psi(s)| ds\)   \\
\dis \le |d|+M + g\(N + \int_{t_0}^t|\ga(s)\dot\psi(s)| ds\)
\end{array}\]
for $t\geq t_0$. Note that here we need that
$$
[-d-\la\sup_{\psi\in\R}F(\psi),-d-\la\inf_{\psi\in\R}F(\psi)]\subset J
$$
for any $\la\in[0,1]$. Then
\[\begin{array}{l}
\dis |\dot \psi(t)| \le\sqrt{2(|d|+M) + 2g\(N+\int_{t_0}^t|\ga(s)\dot\psi(s)| ds\)} \\
\dis \le \sqrt{2(|d|+M)} + \sqrt{2g\(N + \int_{t_0}^t|\ga(s)\dot\psi(s)| ds\)}.
\end{array}\]
Setting $\dis w(t):=N+\int_{t_0}^t|\ga(s)\dot\psi(s)| ds$:
\[
w'(t) \le |\ga(t)|\sqrt{2(|d|+M)} + |\ga(t)|\sqrt{2g(w(t))}
\]
and then
\[
w(t) \le N + \Gamma\sqrt{2(|d|+M)} + \int_{t_0}^t |\ga(s)| \sqrt{2g(w(s))}\, ds.
\]
Bihari's inequality implies
\[
w(t) \le G^{-1} \( G\(N + \Gamma\sqrt{2(|d|+M)}\) + \int_{t_0}^t |\ga(s)|\, ds\)
\]
where $G^{-1}$ is the inverse function of
\[
G(x):=\int_1^x \frac{du}{\sqrt{2g(u)}}
\]
and for all $t\in [t_0,t_0+1]$ for which
\[
G\(N + \Gamma\sqrt{2(|d|+M)}\) + \int_{t_0}^t |\ga(s)|\, ds
\]
belongs to the domain of $G^{-1}$. Since $g(r)>0$ for all $r>0$ we get:
\[
G:[0,\infty[ \to \left [-\int_0^1 \frac{du}{\sqrt{2g(u)}}, \int_1^\infty\frac{du}{\sqrt{2g(u)}} \right [ .
\]
So $G\(N + \Gamma\sqrt{2(|d|+M)}\) + \int_{t_0}^t |\ga(s)|\, ds$ belongs to the domain of $G^{-1}$ if and only if
\[
G\(N + \Gamma\sqrt{2(|d|+M)}\) + \int_{t_0}^t |\ga(s)|\, ds < \int_1^\infty\frac{du}{\sqrt{2g(u)}},
\]
and this holds if:
\[
\begin{array}{c}
\dis \int_1^{N + \Gamma\sqrt{2(|d|+M)}} \frac{du}{\sqrt{2g(u)}} + \Gamma < \int_1^\infty\frac{du}{\sqrt{2g(u)}} \\ \dis \Leftrightarrow \; \Gamma < \int_{N + \Gamma\sqrt{2(|d|+M)}}^\infty\frac{du}{\sqrt{2g(u)}} .
\end{array}
\]
Then the thesis follows with
\[
\begin{array}{c}
\dis M_1 = \\
\dis \sqrt{2\( |d|+M + g\(G^{-1} \( G\(N + \Gamma\sqrt{2(|d|+M)}\) + \Gamma\)\)\)}.
\end{array}
\]
The proof is finished. \end{proof}

Now we assume the following hold:
\begin{itemize}
\item[$(A_1)$] $f(\psi)$ is odd and continuous,
\item[$(A_2)$] $\ga(t)$ is odd and continuous,
\item[$(A_3)$] $\left[-d-M, -d+M+\frac{M_1^2}{2}\right]\subset J$.
\end{itemize}
Note $(A_3)$ implies iii) of Lemma \ref{lem2.1}. Suppose that $\psi(t)$ solves equation \eqref{ee1}. Then $-\psi(-t)$ is also a solution of \eqref{ee1}. Let $X, Y,Z$ be, resp., the space of $C^2$, $C^0$, and $C^1$, $1-$periodic odd functions. By $(A_3)$, we may take $M_1<\bar M_1$ so that
$$
\left[-d-M, -d+M+\frac{\bar M_1^2}{2}\right]\subset J.
$$
Since $\psi(t)=\int_0^t\dot\psi(s)ds$ implies $\|\psi\|_0\le\|\dot\psi\|_0$, we can take the norm $\|\psi\|_1:=\|\dot\psi\|_0$ on $Z$. Let $B_{\bar M_1}(0)$ be the ball in $Z$ of radius $\bar M_1$ and center $0$. Then the map:
\[
{\cal F}_\la:\psi \mapsto \la f(\psi) + \la\ga(t)h\(H^{-1}\(\frac{\dot\psi^2}{2}-\la F(\psi)-d\)\)
\]
from $B_{\bar M_1}(0)$ to $Y$ is well-defined.  Moreover, for any function $v(t)\in Y$ the equation
\[
\ddot\psi(t) = v(t)
\]
has the unique odd $1-$periodic solution
$$
\begin{array}{l}
\dis\psi(t)=(L_pv)(t):=\int_0^t\int_0^s v(\tau)d\tau ds-t\int_0^1\int_0^s v(\tau)d\tau ds\\
\dis =\int_0^t(t-s)v(s)ds+t\int_0^1sv(s)ds\in X
\end{array}
$$
with $\|\psi\|_2\le 2\|v\|_0$, since $\|\psi\|_0\le\|\dot \psi\|_0\le 2\|v\|_0$ and $\|\psi\|_2=\max\{\|\psi\|_0,\|\dot\psi\|_0,\|\ddot\psi\|_0\}\le 2\|v\|_0$. Hence $L_p : Y\to X$ is continuous and so $L_p{\cal F}_\la : B_{\bar M_1}(0)\to X$ is well-defined and continuous.  Arzela-Ascoli theorem gives the compactness of the embedding $X\subset Z$. Then $L_p{\cal F}_\la : B_{\bar M_1}(0)\to Z$ is well-defined, continuous and compact. Next from Lemma \ref{lem2.1} it follows that equation $\psi =L_p{\cal F}_\la\psi$ has no solutions in the boundary of ball $B_{\bar M_1}(0)$. Hence we get
$$
\deg\(I-L_p{\cal F}_1,B_{\bar M_1}(0),0\)=1.
$$
So, we proved the following result.

\begin{theorem}\label{thm2.2}
Assume i), ii) and iv) of Lemma \ref{lem2.1}, $(A_1)$, $(A_2)$, $(A_3)$ and $h$ is continuous satisfying $(C_1)$. Then equation
\begin{equation}\label{eee1}
\ddot \psi = f(\psi) + \ga(t)h\(H^{-1}\(\frac{\dot\psi^2}{2}-F(\psi)-d\)\)
\end{equation}
has a $1-$periodic odd solution.
\end{theorem}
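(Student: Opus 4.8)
The plan is to reformulate \eqref{eee1} as a fixed point problem in a space of odd $1$-periodic functions and to produce a solution via a Leray--Schauder degree argument along the homotopy \eqref{ee1}, which links our equation (at $\la=1$) to the trivial equation $\ddot\psi=0$ (at $\la=0$). The role of the oddness hypotheses $(A_1)$ and $(A_2)$ is to make the class of odd solutions invariant under the dynamics: since $f$ and $\ga$ are odd and $F$ is consequently even, a direct substitution shows that if $\psi(t)$ solves \eqref{ee1} then so does $-\psi(-t)$. Working in odd functions is therefore natural, and it has the decisive technical payoff that it removes the constant and linear kernel of $\ddot\psi$, so that the inverse operator becomes single-valued.

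First I would fix the functional setting already introduced before the statement: let $X$, $Y$, $Z$ denote the spaces of $C^2$, $C^0$, $C^1$ odd $1$-periodic functions, equipping $Z$ with the norm $\|\psi\|_1:=\|\dot\psi\|_0$ (valid because an odd periodic $\psi$ satisfies $\psi(0)=0$, whence $\|\psi\|_0\le\|\dot\psi\|_0$). For $v\in Y$ the problem $\ddot\psi=v$ is solvable in odd periodic functions---the compatibility condition $\int_0^1 v=0$ holds automatically by oddness---and has the unique odd periodic solution $L_pv$ given by the explicit integral (Green's function) formula, with $\|L_pv\|_2\le 2\|v\|_0$. Introducing the nonlinear superposition map ${\cal F}_\la:\psi\mapsto \la f(\psi)+\la\ga(t)h(H^{-1}(\frac{\dot\psi^2}{2}-\la F(\psi)-d))$, equation \eqref{ee1} is equivalent to $\psi=L_p{\cal F}_\la\psi$ in $Z$; here $(C_1)$ and continuity of $h$ ensure ${\cal F}_\la$ maps into $Y$, while the compact embedding $X\subset Z$ (Arzela--Ascoli) makes $L_p{\cal F}_\la$ a compact map on the ball $B_{\bar M_1}(0)\subset Z$.

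The core of the argument is the uniform a priori bound. Using $(A_3)$ I would select $\bar M_1>M_1$ with $[-d-M,-d+M+\bar M_1^2/2]\subset J$, which keeps the argument of $H^{-1}$ inside $J$ for every $\psi\in B_{\bar M_1}(0)$ and every $\la$, so the deformation never leaves the domain. Any fixed point $\psi$ of $L_p{\cal F}_\la$ is a $1$-periodic solution of \eqref{ee1}, and since $\int_0^1\dot\psi=0$ its derivative vanishes at some $t_0\in[0,1[$; Lemma \ref{lem2.1}, whose hypotheses i), ii), iv) are assumed and iii) is furnished by $(A_3)$, then gives $\|\dot\psi\|_0\le M_1<\bar M_1$. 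Hence $\psi=L_p{\cal F}_\la\psi$ has no solution on $\partial B_{\bar M_1}(0)$ for any $\la\in[0,1]$, so $\deg(I-L_p{\cal F}_\la,B_{\bar M_1}(0),0)$ is well defined and, by homotopy invariance, constant in $\la$. At $\la=0$ we have ${\cal F}_0\equiv0$, so the map is the identity and the degree equals $1$; therefore $\deg(I-L_p{\cal F}_1,B_{\bar M_1}(0),0)=1\neq0$, which yields a fixed point at $\la=1$, i.e.\ an odd $1$-periodic solution of \eqref{eee1}.

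I expect the only real obstacle to be the uniform a priori estimate: one must verify that the Bihari-type bound of Lemma \ref{lem2.1} applies simultaneously for all $\la\in[0,1]$ and, in tandem, that $\frac{\dot\psi^2}{2}-\la F(\psi)-d$ never escapes $J$ throughout the homotopy. This is precisely what the slack $M_1<\bar M_1$ together with condition $(A_3)$ secures. The remaining ingredients---the reflection symmetry, the explicit form and boundedness of $L_p$, and the compactness via Arzela--Ascoli---are routine once the odd-function framework is set up.
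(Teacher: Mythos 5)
Your proposal is correct and follows essentially the same route as the paper: the same odd-periodic function spaces with the norm $\|\psi\|_1=\|\dot\psi\|_0$, the same explicit inverse $L_p$ with $\|L_pv\|_2\le 2\|v\|_0$, compactness via Arzela--Ascoli, the a priori bound from Lemma \ref{lem2.1} (with iii) supplied by $(A_3)$ and the slack $M_1<\bar M_1$), and homotopy invariance of the Leray--Schauder degree down to $\la=0$ where the degree is $1$. The only difference is that you make explicit a few steps the paper leaves implicit (the homotopy-invariance argument and why $\int_0^1 v=0$ holds for odd $v$), which is fine.
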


In a similar way we may also consider the problem of existence of solutions of \eqref{eee1} satisfying the condition
\begin{equation}\label{boun1}
\psi(t+q)=\psi(t) +2\pi
\end{equation}
for some $q\in\N$ by supposing $(C_2)$ in addition. So we are looking for $q-$shift-periodic solutions. This is accomplished by looking for solutions $\psi(t)$ of \eqref{ee1} satisfying $\psi(t)=\vphi(t) + \frac{2\pi}{q}t$ for all $t\in\R$ with $\vphi(t+q)=\vphi(t)$. Plugging the equality into \eqref{ee1} we get the equation for $\vphi(t)$
\begin{equation}\label{ee2}
\begin{array}{l}
\ddot\vphi = \la f\(\vphi+\frac{2\pi}{q}t\) + \\ \dis \la\ga(t)h\(H^{-1}\(\frac{1}{2}\(\dot\vphi+\frac{2\pi}{q}\)^2-\la F\(\vphi+\frac{2\pi}{q}t\)-d\)\).
\end{array}
\end{equation}
As we observed previously, if $f(\psi)$ and $\ga(t)$ are odd and $\vphi(t)$ is a solution of \eqref{ee2} then $-\vphi(-t)$ is also a solution of the same equation \eqref{ee2}. Then we see, as in the proof of Lemma \ref{lem2.1}, that the following inequality holds for any $q-$periodic function $\vphi(t)$ such that $\dot\vphi(t_0)=0$:
$$
\begin{array}{l}
\dis\frac{1}{2}\(\dot\vphi(t)+\frac{2\pi}{q}\)^2\le M + |d| \\
\dis + H\Bigg( H^{-1}\(\frac{2\pi^2}{q^2}-\la F\(\vphi(t_0)+\frac{2\pi t_0}{q}\)-d\) \\ \dis + \la\int_{t_0}^t\ga(s)\left [\dot\vphi(s) +\frac{2\pi}{q}\right ]ds \Bigg)     \\
\dis \le M + |d| + g\Bigg( \left |H^{-1}\(\frac{2\pi^2}{q^2} -\la F\(\vphi(t_0)+\frac{2\pi t_0}{q}\)-d\)\right | \\ \dis + \int_{t_0}^t|\ga(s)|\left |\dot\vphi(s) +\frac{2\pi}{q}\right |ds \Bigg) \\
\dis \le M + |d| + g\( N_q + \int_{t_0}^t|\ga(s)|\left |\dot\vphi(s) +\frac{2\pi}{q}\right |ds \)
\end{array}
$$
where
\begin{equation}\label{ee3}
N_q=\sup\left \{ |H^{-1}(x)| \;\Bigg|\; \frac{2\pi^2}{q^2} - M\le x+d \le \frac{2\pi^2}{q^2} + M\right \}.
\end{equation}
Note that we need that $\frac{2\pi^2}{q^2}-\la F\(\vphi(t_0)+\frac{2\pi t_0}{q}\)-d\in J$ (the domain of $H^{-1}$) in order that the inequality makes sense.  Next arguing as before with $$
w(t)=N_q+\int_{t_0}^t |\ga(s)|\left |\dot\vphi(s) +\frac{2\pi}{q}\right |ds,
$$
we get:
\[
\begin{array}{l}
\dis \dot w(t) \le |\ga(t)|\sqrt{2[M + |d| + g(w(t))]} \\ \dis \le |\ga(t)|\sqrt{2(M + |d|)} + |\ga(t)|\sqrt{2 g(w(t))}
\end{array}
\]
and then
\[
w(t) \le N_q + \sqrt{2(M + |d|)}\bar\Gamma + \int_{t_0}^t  |\ga(s)|\sqrt{2 g(w(s))}ds
\]
for any $t\in[t_0,t_0+q]$, with $\bar\Gamma:=\int_0^q|\ga(s)|ds$. As before Bihari's inequality implies:
\[
\begin{array}{l}
\dis w(t) \le G^{-1} \( G\(N_q + \bar\Gamma\sqrt{2(|d|+M)}\) + \int_{t_0}^t |\ga(s)|\, ds\),\\ \dis \; \forall t\in[t_0,t_0+q]
\end{array}
\]
and then, since $\vphi(t)$ is $q-$periodic:
\[
\begin{array}{l}
\dis |\dot\vphi(t)|\le M_2:=\frac{2\pi}{q} +\Bigg(2\Big(M+|d|\\ \dis +g\Big(G^{-1}\Big( G\(N_q + \bar\Gamma\sqrt{2(|d|+M)}\)+ \bar\Gamma\Big) \Big)\Big)\Bigg)^{1/2},
\end{array}
\]
for all $t\in\R$. Thus, arguing as before we conclude with the following.
\begin{theorem}\label{thm2.3}
Assume i), ii)  of Lemma \ref{lem2.1}, $(C_2)$, $(A_1)$, $(A_2)$, $h$ is continuous satisfying $(C_1)$, and
\begin{itemize}
\item[iv')] $\dis\bar\Gamma< \int_{N_q+\bar\Gamma\sqrt{2(M+|d|)}}^\infty\frac{du}{\sqrt{2g(u)}}$,
\item[$(A_3')$] $\left[\frac{2\pi^2}{q^2}-d-M, \frac{2\pi^2}{q^2}-d+M+\frac{M_2^2}{2}\right]\subset J$.
\end{itemize}
Then equation \eqref{eee1} has a $q-$shift-periodic solution, i.e., satisfying \eqref{boun1}.
\end{theorem}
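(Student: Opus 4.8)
The plan is to follow verbatim the topological-degree scheme used for Theorem~\ref{thm2.2}, the only change being that we search for a shift-periodic rather than a genuinely periodic profile. Writing a prospective solution of \eqref{ee1} (at $\la=1$) in the form $\psi(t)=\vphi(t)+\frac{2\pi}{q}t$ with $\vphi$ being $q$-periodic reduces \eqref{eee1} to equation \eqref{ee2} for $\vphi$, and the shift condition \eqref{boun1} becomes automatic: $\psi(t+q)=\vphi(t)+\frac{2\pi}{q}t+2\pi=\psi(t)+2\pi$. So it suffices to produce a $q$-periodic solution of \eqref{ee2} at $\la=1$. Because $f$ and $\ga$ are odd by $(A_1)$, $(A_2)$ (so $F$ is even), the map $\vphi(t)\mapsto-\vphi(-t)$ sends solutions of \eqref{ee2} to solutions; this is the reversible symmetry already noted before the statement. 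We therefore look for $\vphi$ in the class of odd $q$-periodic functions, within which the problem is invariant.

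Next I would set up the fixed-point formulation exactly as for Theorem~\ref{thm2.2}. Let $X,Y,Z$ denote the spaces of $C^2$, $C^0$ and $C^1$ odd $q$-periodic functions, with $\|\vphi\|_1:=\|\dot\vphi\|_0$ on $Z$. For odd $v\in Y$ the equation $\ddot\vphi=v$ has a unique odd $q$-periodic solution $\vphi=L_pv\in X$ (solvability because $\int_0^q v\,dt=0$ by oddness, uniqueness because the only odd $q$-periodic affine function is $0$), with $\|L_pv\|_2\le 2\|v\|_0$. Define
\[
{\cal F}_\la:\vphi\mapsto \la f\(\vphi+\tfrac{2\pi}{q}t\)+\la\ga(t)h\(H^{-1}\(\tfrac12\(\dot\vphi+\tfrac{2\pi}{q}\)^2-\la F\(\vphi+\tfrac{2\pi}{q}t\)-d\)\).
\]
A direct parity check shows that $\vphi+\frac{2\pi}{q}t$ is odd, hence $f(\cdot)$ is odd, the argument of $H^{-1}$ is even, and multiplying by the odd $\ga$ restores oddness; together with $(C_2)$, which makes $f(\vphi+\frac{2\pi}{q}t)$ genuinely $q$-periodic, this shows ${\cal F}_\la$ maps odd $q$-periodic functions into $Y$. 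By $(C_1)$ the map $H^{-1}$ is defined, and on a suitable ball $(A_3')$ keeps the argument of $H^{-1}$ in $J$, so $L_p{\cal F}_\la$ is well defined; by Arzela-Ascoli the embedding $X\subset Z$ is compact, hence $L_p{\cal F}_\la$ is a compact continuous map of any ball of $Z$ into $Z$.

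For the a priori estimate I would invoke the computation carried out immediately before the statement. Any odd $q$-periodic $\vphi$ has $\dot\vphi(t_0)=0$ at some $t_0$, since $\dot\vphi$ is even with $\int_0^q\dot\vphi\,dt=0$. The displayed chain of inequalities, closed through Bihari's inequality (Theorem~\ref{bih}) with $g$ from ii) and the threshold iv'), then yields the uniform bound $\|\dot\vphi\|_0\le M_2$ for every odd $q$-periodic solution $\vphi$ of \eqref{ee2} and every $\la\in[0,1]$, where $N_q$ is given by \eqref{ee3}. Choosing a radius $\bar M_2>M_2$ guarantees that $\vphi=L_p{\cal F}_\la\vphi$ has no solution on $\partial B_{\bar M_2}(0)$ for any $\la$, so homotopy invariance of the Leray-Schauder degree gives
\[
\deg\(I-L_p{\cal F}_1,B_{\bar M_2}(0),0\)=\deg\(I-L_p{\cal F}_0,B_{\bar M_2}(0),0\)=\deg\(I,B_{\bar M_2}(0),0\)=1,
\]
because ${\cal F}_0\equiv0$. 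Hence $L_p{\cal F}_1$ has a fixed point $\vphi$, i.e.\ a $q$-periodic solution of \eqref{ee2} at $\la=1$, and $\psi(t)=\vphi(t)+\frac{2\pi}{q}t$ is the desired $q$-shift-periodic solution of \eqref{eee1}.

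The genuinely delicate step is the uniform a priori bound, not the degree bookkeeping. One must keep the argument of $H^{-1}$ inside $J$ for all $\la\in[0,1]$ and simultaneously close the implicit integral inequality for $w(t)=N_q+\int_{t_0}^t|\ga(s)||\dot\vphi(s)+\frac{2\pi}{q}|\,ds$. This is exactly what iv') secures: it keeps the Bihari comparison $G^{-1}$ defined up to the full forcing budget $\bar\Gamma=\int_0^q|\ga|$, while $(A_3')$ controls the range of the kinetic term $\frac12(\dot\vphi+\frac{2\pi}{q})^2$ against $F$ and $d$. Once these two quantitative conditions are in force, the remainder is a routine transcription of the proof of Theorem~\ref{thm2.2}.
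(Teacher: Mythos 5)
Your proposal is correct and follows essentially the same route as the paper: the paper likewise substitutes $\psi(t)=\vphi(t)+\frac{2\pi}{q}t$ to get \eqref{ee2}, uses the oddness of $f$ and $\ga$ to work with odd $q$-periodic $\vphi$, derives the a priori bound $\|\dot\vphi\|_0\le M_2$ via the same Bihari estimate with $N_q$ and $\bar\Gamma$, and then ``argues as before,'' i.e.\ runs the Leray--Schauder degree computation of Theorem \ref{thm2.2}. Your write-up merely makes explicit the parity check and the degree bookkeeping that the paper leaves implicit.
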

\begin{remark}\label{rr4} Since $\frac{2\pi^2}{q^2}>0$ it is enough that $J\supset [a,\infty[$ for some $a\in\R$ and $-d>M+a$ for getting $(A_3)$ and $(A_3')$.
\end{remark}

We conclude this section with the following.
\begin{theorem}\label{thm2.4}
Assume i), ii) and iv) of Lemma \ref{lem2.1}, $(A_1)$, $(A_3)$ and $h$ is continuous satisfying $(C_1)$. Moreover, suppose it holds
\begin{itemize}
\item[$(A_4)$] $\ga(t+\frac{1}{2})=-\ga(t)$ for any $t\in [0,1]$ and $\ga$ is continuous.
\end{itemize}
Then equation \eqref{eee1} has a $1-$periodic solution satisfying
\begin{equation}\label{anti}
\psi\(t+\frac{1}{2}\)=-\psi(t)\quad\forall t\in [0,1].
\end{equation}
\end{theorem}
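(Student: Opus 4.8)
The plan is to establish Theorem \ref{thm2.4} by the topological degree scheme already used for Theorem \ref{thm2.2}, only replacing the reflection $\psi\mapsto-\psi(-t)$ by the symmetry
$$
\sigma:\psi(t)\mapsto -\psi(t+\tfrac{1}{2}),
$$
whose fixed-point set is exactly the set of functions obeying the constraint \eqref{anti}. The first step is to check that $\sigma$ sends solutions of \eqref{ee1} to solutions. If $\psi(t)$ solves \eqref{ee1} and $\tilde\psi(t):=-\psi(t+\frac12)$, then $\ddot{\tilde\psi}(t)=-\ddot\psi(t+\frac12)$; since $f$ is odd by $(A_1)$ (so $F$ is even), $\dot\psi(t+\frac12)^2=\dot{\tilde\psi}(t)^2$, and $\ga(t+\frac12)=-\ga(t)$ by $(A_4)$, the right-hand side of \eqref{ee1} evaluated at $t+\frac12$ is the negative of the right-hand side evaluated at $t$ with $\psi$ replaced by $\tilde\psi$. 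Hence $\tilde\psi$ also solves \eqref{ee1}, so the solution we seek is a fixed point of $\sigma$.

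Next I would introduce the Banach spaces $X,Y,Z$ of $C^2,C^0,C^1$ functions obeying $\psi(t+\frac12)=-\psi(t)$. Such functions are automatically $1$-periodic and have zero mean on $[0,1]$ (equivalently, only odd Fourier harmonics occur). The key linear ingredient is a solution operator for $\ddot\psi=v$: given $v\in Y$, which has zero mean, a $1$-periodic primitive exists and is unique up to an additive constant, and that constant is uniquely fixed by $\sigma$-symmetry. Indeed, if $\psi_0$ is any $1$-periodic solution, then $-\psi_0(t+\frac12)$ is another, so $-\psi_0(t+\frac12)-\psi_0(t)=c$ is constant and $\psi:=\psi_0+\frac{c}{2}\in X$ is the unique solution lying in $X$. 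This yields a continuous linear operator $L:Y\to X$; as for $L_p$ in Theorem \ref{thm2.2}, Arzela-Ascoli makes $L$ compact as a map into $Z$, on which $\|\psi\|_1:=\|\dot\psi\|_0$ is a norm because a constant in $X$ must vanish.

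I would then verify that the map ${\cal F}_\la$ of Theorem \ref{thm2.2} preserves this structure, i.e. sends $\psi\in Z$ obeying \eqref{anti} to a function in $Y$; this is the computation of the first step applied to the right-hand side integrand rather than to the whole equation. Consequently $L{\cal F}_\la:B_{\bar M_1}(0)\to Z$ is well-defined, continuous and compact on the ball of radius $\bar M_1$, chosen as in Theorem \ref{thm2.2} so that $(A_3)$ guarantees iii) of Lemma \ref{lem2.1} up to radius $\bar M_1$. Any fixed point of $L{\cal F}_\la$ is a $1$-periodic solution of \eqref{ee1}, hence of \eqref{e5a}, whose derivative vanishes at some $t_0\in[0,1)$ because $\dot\psi$ has zero mean; assumptions i), ii), iv) then give the a priori bound $\|\dot\psi\|_0\le M_1<\bar M_1$ of Lemma \ref{lem2.1}, so no fixed point sits on $\partial B_{\bar M_1}(0)$ for any $\la\in[0,1]$. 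Homotopy invariance of the degree applies, and at $\la=0$ the only $\sigma$-symmetric solution of $\ddot\psi=0$ is $\psi\equiv0$, whence $\deg(I-L{\cal F}_1,B_{\bar M_1}(0),0)=\deg(I,B_{\bar M_1}(0),0)=1$; a fixed point at $\la=1$ is then a $1$-periodic solution of \eqref{eee1} satisfying \eqref{anti}.

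The only genuinely new point, relative to Theorem \ref{thm2.2}, is the construction of the solution operator $L$ on the anti-symmetric space together with the observation that the integration constant is pinned down uniquely by $\sigma$-symmetry; once the correct symmetry has been identified, the a priori bound through Bihari's inequality, the compactness, and the degree computation all transfer essentially verbatim.
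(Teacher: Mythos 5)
Your proposal is correct and follows essentially the same route as the paper: the same spaces of $\frac{1}{2}$-anti-periodic functions, the same a priori bound from Lemma \ref{lem2.1} (using that $\dot\psi$ has zero mean), and the same homotopy to $\la=0$ giving degree $1$. The only cosmetic difference is that the paper builds the inverse of $\frac{d^2}{dt^2}$ explicitly as the square $L_a^2$ of the first-order anti-periodic solution operator $(L_av)(t)=\int_0^tv(s)\,ds-\frac{1}{2}\int_0^{1/2}v(s)\,ds$, whereas you pin down the same operator abstractly by using the symmetry to fix the integration constant; you are in fact more explicit than the paper in verifying that ${\cal F}_\la$ preserves the anti-symmetry.
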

\begin{proof}
Let $X, Y,Z$ be, resp., the space of $C^2$, $C^0$, and $C^1$, $1-$periodic functions satisfying \eqref{anti}. By $(A_3)$, we may take $M_1<\bar M_1$ so that
$$
\left[-d-M, -d+M+\frac{\bar M_1^2}{2}\right]\subset J.
$$
Furthermore, for any function $v\in Y$ the equation
\[
\dot\psi(t) = v(t)
\]
has the unique $1-$periodic solution
$$
\psi(t)=(L_av)(t):=\int_0^tv(s)ds-\frac{1}{2}\int_0^{1/2}v(s)ds
$$
satisfying \eqref{anti}. Clearly $\|\psi\|_0\le\frac{5}{4}\|v\|_0$, so $L_a : Y\to Z$ is continuous. This gives also $\|\psi\|_0\le\frac{5}{4}\|\dot\psi\|_0$. So we can take the norm $\|\psi\|_1:=\|\dot\psi\|_0$ on $Z$. Like above, let $B_{\bar M_1}(0)$ be the ball in $Z$ of radius $\bar M_1$ and center $0$. Then $L_a^2{\cal F}_\la : B_{\bar M_1}(0)\to Z$ is well-defined, continuous and compact. Next from Lemma \ref{lem2.1} it follows that equation $\psi =L_a^2{\cal F}_\la\psi$ has no solutions in the boundary of ball $B_{\bar M_1}(0)$. Hence we get
$$
\deg\(I-L_a^2{\cal F}_1,B_{\bar M_1}(0),0\)=1.
$$
The proof is finished.
\end{proof}

\begin{remark}
Condition \eqref{anti} means that $\psi(t)$ is $\frac{1}{2}-$anti-periodic \cite{AF,H} and of course such $\psi$ is $1-$periodic.
\end{remark}
\section{Applications to PT-symmetric dimer with varying gain-loss}\label{appl}

First, we apply results of Section \ref{bif}. Following the Introduction, we study the coupled system
\begin{equation}\label{e1e}
\begin{array}{l}
\dis \ddot \psi=4\(c^2\sin\psi+\ep\ga(t)z\)\\
\dis \dot z=\ep\ga(t)\dot\psi
\end{array}
\end{equation}
where $\ga(t)$ is $1-$periodic and $C^2$-smooth, $\ep$ is small and $c>0$. It has a form of \eqref{e1} with $f(\psi)=4c^2\sin\psi$ and $h(z)=4z$. So now $F(\psi)=4c^2(1-\cos\psi)$ and $H(z)=2z^2$ with either $I=]-\infty,0[$ or $I=]0,\infty[$, $J=]0,\infty[$, and $(C_1)$ holds. Then \eqref{e1b} has the form
\begin{equation}\label{e1be}
z=\pm\sqrt{\frac{\dot\psi^2}{4}+2c^2(\cos\psi-1)-\frac{d}{2}},
\end{equation}
and \eqref{e1c} is now
\begin{equation}\label{e1ce}
\ddot \psi=4\(c^2\sin\psi\pm\ep\ga(t)\sqrt{\frac{\dot\psi^2}{4}+2c^2(\cos\psi-1)-\frac{d}{2}}\).
\end{equation}
The unperturbed equation of \eqref{e1ce}  is given by (see \eqref{e2})
\begin{equation}\label{e2e}
\ddot \psi=4c^2\sin\psi,
\end{equation}
which possesses a family of periodic solutions \cite[pp. 114-115]{L}
$$
\psi_{k,c}(t)=\pi+2\arcsin\(k\sn\(2ct,k\)\)
$$
with periods
$$
T(k,c)=\frac{2}{c}K(k),
$$
where $K(k)$ is the complete elliptic integral of the first kind, $\sn$ is a Jacobi elliptic function and $k\in(0,1)$. Note $T(k,c)\to\infty$ as $k\to1$ and $T(k,c)\to\frac{\pi}{c}$ as $k\to0$. Moreover it holds
$$
\dot\psi_{k,c}(t)=4ck\cn\(2ct,k\),
$$
where $\cn$ is a Jacobi elliptic function. Finally, we note
$$
\frac{\dot\psi_{k,c}^2}{2}+4c^2\(\cos\psi_{k,c}-1\)=-8c^2(1-k^2).
$$
So we suppose
\begin{equation}\label{c1e}
d<-8c^2(1-k^2)
\end{equation}
to get $(H_2)$. Resonance assumption $(R)$ has the form
\begin{itemize}
\item[$(R_p)$] \ \ \ $K(k)=\frac{cp}{2q}$ for $p,q\in\N$.
\end{itemize}
Applying Theorem \ref{th1}, we obtain the following result.

\begin{theorem}\label{th1e}
Suppose $(R_p)$, \eqref{c1e} and that there is an $\al_0\in\R$ such that
$$
\begin{array}{c}
\dis \int_0^T\ga(t+\al_0)\cn\(2ct,k\)dt=0\\ \dis \quad\text{and}\quad \int_0^T\dot\ga(t+\al_0)\cn\(2ct,k\)dt\ne0.
\end{array}
$$
Then there is a $\de>0$ such that for any $0\ne \ep\in (-\de,\de)$ there are precisely two $p-$periodic solutions $(\psi_\pm,z_\pm)$ of \eqref{e1e} with
$$
\begin{array}{c}
\dis (\psi_\pm(t),z_\pm(t))\\ \dis =\(\psi_{k,c}(t-\al_0),\pm\sqrt{-4c^2(1-k^2)-\frac{d}{2}}\)+O(\ep).
\end{array}
$$
\end{theorem}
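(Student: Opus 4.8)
The plan is to read \eqref{e1e} as the instance of \eqref{e1} with $f(\psi)=4c^2\sin\psi$ and $h(z)=4z$, so that $F(\psi)=4c^2(1-\cos\psi)$ and $H(z)=2z^2$, and then to invoke Theorem \ref{th1}. The one structural subtlety I must respect is that $H(z)=2z^2$ is a diffeomorphism only on $I=]-\infty,0[$ or on $I=]0,\infty[$ (each onto $J=]0,\infty[$), so there are \emph{two} admissible inverses $H^{-1}(w)=\pm\sqrt{w/2}$, producing the two reduced equations \eqref{e1ce} corresponding to the two signs in \eqref{e1be}. I will apply Theorem \ref{th1} on each branch separately; the two resulting solutions will be distinct because their $z$-components carry opposite signs, and this is precisely the mechanism giving \emph{precisely two} solutions.

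First I would verify the hypotheses of Theorem \ref{th1} with the identification $\la\leftrightarrow k$, $\psi_\la=\psi_{k,c}$, $T_\la=T(k,c)=\frac2cK(k)$. Condition $(R)$ is literally $(R_p)$, since $T(k,c)=\frac pq$ reads $K(k)=\frac{cp}{2q}$. For $(H_2)$, the stated identity $\frac{\dot\psi_{k,c}^2}{2}+4c^2(\cos\psi_{k,c}-1)=-8c^2(1-k^2)$ identifies $c_\la=-8c^2(1-k^2)$, so $c_\la-d\in J=]0,\infty[$ is exactly \eqref{c1e}. The genuinely non-trivial hypothesis is $(H_1)$: I must check that $\dot\psi_{k,c}$ spans the $T_\la$-periodic solutions of the variational equation $\ddot\psi=4c^2\cos(\psi_{k,c}(t))\,\psi$. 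By the remark following $(H_1)$ this uniqueness is equivalent to $\partial_kT(k,c)\ne0$, and since $\partial_kT(k,c)=\frac2cK'(k)$ with $K'(k)>0$ on $(0,1)$, it holds for every admissible $k$. I expect this to be the main (albeit mild) obstacle, as it is the only point where the monotonicity of the elliptic period is essential; the second point requiring care is the branch bookkeeping used below to conclude ``precisely two''.

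Next I would translate the abstract Melnikov condition into the two stated integrals. Using $\dot\psi_{k,c}(t)=4ck\,\cn(2ct,k)$,
\[
M_\la(\al)=\int_0^T\dot\psi_{k,c}(t+\al)\ga(t)\,dt
=4ck\int_0^T\cn\(2c(t+\al),k\)\ga(t)\,dt,
\]
and the substitution $u=t+\al$, together with the $T$-periodicity of the integrand (recall $T=qT_\la=p$, so both $\cn(2cu,k)$ and $\ga(u-\al)$ are $T$-periodic), gives
\[
M_\la(\al)=4ck\int_0^T\cn(2cu,k)\,\ga(u-\al)\,du .
\]
Evaluating at $\al=-\al_0$ yields $M_\la(-\al_0)=4ck\int_0^T\cn(2ct,k)\ga(t+\al_0)\,dt=0$ by the first hypothesis, while differentiating under the integral sign gives $M_\la'(-\al_0)=-4ck\int_0^T\cn(2ct,k)\dot\ga(t+\al_0)\,dt\ne0$ by the second. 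Hence $\al=-\al_0$ is a simple zero of $M_\la$. (Note the constant $-h(H^{-1}(c_\la-d))$ hidden in Theorem~\ref{th1} is nonzero and merely an overall factor, identical in location of zeros on both branches.)

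Finally I would apply Theorem \ref{th1} on each branch. Each application produces, for every small $\ep\ne0$, a unique $p$-periodic $\psi_\pm(t,\ep)$ with $\psi_\pm(t,\ep)\to\psi_{k,c}(t+\al(\ep))$ and $\al(\ep)\to-\al_0$, i.e.\ leading order $\psi_{k,c}(t-\al_0)$. The companion $z_\pm$ is recovered from \eqref{e1be}: at leading order
\[
\frac{\dot\psi_{k,c}^2}{4}+2c^2(\cos\psi_{k,c}-1)
=\frac12\(-8c^2(1-k^2)\)=-4c^2(1-k^2),
\]
so that $z_\pm=\pm\sqrt{-4c^2(1-k^2)-\frac{d}{2}}+O(\ep)$, the radicand being strictly positive precisely by \eqref{c1e}. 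Because this radicand is strictly positive, $z_+>0>z_-$, so the two solutions are genuinely distinct; moreover any $p$-periodic solution near the unperturbed orbit has $z$ of a single definite sign and hence lies on exactly one branch, so the per-branch uniqueness from Theorem \ref{th1} yields exactly two, completing the argument.
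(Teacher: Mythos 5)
Your proposal is correct and takes essentially the same route as the paper: the paper proves Theorem \ref{th1e} precisely by identifying \eqref{e1e} with \eqref{e1} via $f(\psi)=4c^2\sin\psi$, $h(z)=4z$, checking $(R)$, $(H_1)$, $(H_2)$ against $(R_p)$ and \eqref{c1e} using the elliptic-function formulas, and then invoking Theorem \ref{th1} on each of the two branches $I=]-\infty,0[$ and $I=]0,\infty[$ of $H^{-1}$. Your added details (monotonicity of $K(k)$ for $(H_1)$, the change of variables showing $\al=-\al_0$ is a simple zero of $M_\la$, and the sign bookkeeping for ``precisely two'') are exactly the steps the paper leaves implicit.
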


Next, $(C_2)$ clearly holds. $(H_3)$ is satisfied for \cite[p. 117]{L}
$$
\psi_{k,c}^l(t)=\pi+2\am\(\frac{2ct}{k},k\)
$$
with periods
$$
T^l(k,c)=\frac{K(k)k}{c},
$$
where $\am$ is a Jacobi elliptic function and $k\in(0,1)$. Moreover it holds
$$
\dot\psi_{k,c}^l(t)=\frac{4c}{k}\dn\(\frac{2ct}{k},k\),
$$
where $\dn$ is a Jacobi elliptic function. Finally, we note
$$
\frac{\dot\psi_{k,c}^{l2}}{2}+4c^2\(\cos\psi_{k,c}^l-1\)=\frac{8c^2}{k^2}(1-k^2).
$$
So we suppose
\begin{equation}\label{c1ee}
d<\frac{8c^2}{k^2}(1-k^2)
\end{equation}
to get $(H_2')$. Resonance assumption ($R'$) has the form
\begin{itemize}
\item[$(R_p')$] \ \ \ $K(k)k=\frac{cp}{q}$ for $p,q\in\N$.
\end{itemize}
Applying Theorem \ref{th1b}, we obtain the following result.

\begin{theorem}\label{th1ee}
Suppose $(R_p')$, \eqref{c1ee} and that there is an $\al_0\in\R$ such that
$$
\begin{array}{c}
\dis \int_0^T\ga(t+\al_0)\dn\(\frac{2ct}{k},k\)dt=0\\ \dis \quad\text{and}\quad \int_0^T\dot\ga(t+\al_0)\dn\(\frac{2ct}{k},k\)dt\ne0.
\end{array}
$$
Then there is a $\de>0$ such that for any $0\ne \ep\in (-\de,\de)$ there are precisely two $p-$shift-periodic solutions $(\psi_\pm^l,z_\pm^l)$ of \eqref{e1e} with
$$
\begin{array}{c}
\dis (\psi_\pm^l(t),z_\pm^l(t))\\ \dis =\(\psi_{k,c}(t-\al_0),\pm\sqrt{\frac{4c^2}{k^2}(1-k^2)-\frac{d}{2}}\)+O(\ep).
\end{array}
$$
\end{theorem}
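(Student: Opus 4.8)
The plan is to read Theorem~\ref{th1ee} as a direct specialization of the shift-subharmonic bifurcation result, Theorem~\ref{th1b}, to system~\eqref{e1e}, whose reduction to~\eqref{e1ce} has already been arranged with $f(\psi)=4c^2\sin\psi$, $h(z)=4z$, $F(\psi)=4c^2(1-\cos\psi)$ and $H(z)=2z^2$; thus $(C_1)$ holds on each of the two branches $I=\,]0,\infty[\,$ and $I=\,]-\infty,0[\,$ with $J=\,]0,\infty[\,$ and $H^{-1}(w)=\pm\sqrt{w/2}$. First I would check that the structural hypotheses are already in force: $(C_2)$ follows from $2\pi$-periodicity of $f$ and $F$, and the rotation family $\psi_{k,c}^l(t)=\pi+2\am(2ct/k,k)$ supplies $(H_3)$, with shift-period $T_\la^l=\frac{K(k)k}{c}$, derivative $\dot\psi_{k,c}^l(t)=\frac{4c}{k}\dn(2ct/k,k)$, and orbital energy $c_\la^l=\frac{8c^2}{k^2}(1-k^2)$. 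Consequently $(H_2')$, namely $c_\la^l-d\in J$, is exactly the hypothesis~\eqref{c1ee}, and the resonance $(R')$, $T_\la^l=p/q$, is exactly $(R_p')$; here $T=p=qT_\la^l\in\N$.

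Next I would compute the Melnikov function of Theorem~\ref{th1b} and match it to the two integral conditions in the statement. Substituting $\dot\psi_{k,c}^l$ gives $M_\la^l(\al)=\int_0^T\dot\psi_{k,c}^l(t+\al)\ga(t)\,dt=\frac{4c}{k}\int_0^T\dn\(\frac{2c(t+\al)}{k},k\)\ga(t)\,dt$. Because $\dn(2c\,\cdot/k,k)$ has period $T_\la^l$ and $\ga$ has period $1$, both are $T$-periodic, so the shift $s=t+\al$ leaves the interval of integration unchanged and yields $M_\la^l(\al)=\frac{4c}{k}\int_0^T\dn(2ct/k,k)\ga(t-\al)\,dt=\frac{4c}{k}\tilde M(-\al)$, where $\tilde M(\beta):=\int_0^T\ga(t+\beta)\dn(2ct/k,k)\,dt$. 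Differentiating under the integral sign gives $\tilde M'(\beta)=\int_0^T\dot\ga(t+\beta)\dn(2ct/k,k)\,dt$, so the two displayed hypotheses say precisely that $\al_0$ is a simple zero of $\tilde M$, equivalently that $-\al_0$ is a simple zero of $M_\la^l$, which is exactly the input required by Theorem~\ref{th1b}.

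I would then apply Theorem~\ref{th1b} separately to each of the two scalar equations~\eqref{e1ce}, one for each sign branch of $H^{-1}$. The prefactor appearing in the bifurcation function is the nonzero constant $h(H^{-1}(c_\la^l-d))=4\cdot(\pm\sqrt{(c_\la^l-d)/2})$, so the location of the simple zero is unaffected and the same Melnikov condition serves both branches. For each branch and all small $\ep\ne0$ the theorem produces a unique $p$-shift-periodic $\psi$ with $\psi\to\psi_{k,c}^l(\cdot-\al_0)$ in $\|\cdot\|_2$, and recovering $z$ from~\eqref{e1be} gives $z=\pm\sqrt{\frac{\dot\psi^2}{4}+2c^2(\cos\psi-1)-\frac{d}{2}}$, which along the limiting orbit equals $\pm\sqrt{\frac{4c^2}{k^2}(1-k^2)-\frac{d}{2}}$, as claimed. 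Since this limiting value is bounded away from $0$, for $\ep$ small the two perturbed solutions retain opposite signs of $z$, so they are genuinely distinct and there are exactly two of them.

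The verifications of the energy identity for $c_\la^l$, of the standard nondegeneracy of the rotation orbit underlying $(H_3)$, and of the domain condition~\eqref{e4a} along the perturbed orbit are inherited from the general theory of Section~\ref{bif}. I expect the only real obstacle to be careful bookkeeping: legitimizing the shift of the integration interval (which rests on $T=p\in\N$ being a common period of $\dn(2c\,\cdot/k,k)$ and $\ga$), reading off the simple zero of $M_\la^l$ at $-\al_0$ from the hypotheses stated at $\al_0$, and organizing the two branches of $H^{-1}$ so that they yield precisely two distinct solutions carrying the advertised $z$-values rather than collapsing onto one another.
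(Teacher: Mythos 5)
Your proposal is correct and follows essentially the same route as the paper: the paper proves Theorem \ref{th1ee} simply by verifying $(C_2)$, $(H_3)$, $(H_2')$ (via \eqref{c1ee}) and $(R')$ (via $(R_p')$) for the rotation family $\psi_{k,c}^l$ and then invoking Theorem \ref{th1b} on each sign branch of $H^{-1}$, exactly as you do. Your extra bookkeeping — the change of variables identifying the stated integral conditions with a simple zero of $M_\la^l$ at $-\al_0$, and the observation that the nonzero constant $h(H^{-1}(c_\la^l-d))$ does not affect the zero — just makes explicit what the paper leaves implicit.
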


Similarly we can study heteroclinic Melnikov bifurcation to \eqref{e1e}. The unperturbed equation \eqref{e2e} has a heteroclinic cycle
$$
\psi_{c,\pm}(t)=\pi\pm2\arctan\(\sinh(2ct)\)
$$
with
$$
\dot\psi_{c,\pm}(t)=\pm4c\sech(2ct),
$$
and
$$
\frac{\dot\psi_{c,\pm}^2}{2}+4c^2\(\cos\psi_{c,\pm}-1\)=0.
$$
So we consider
\begin{equation}\label{c2ee}
d<0,
\end{equation}
which is a limit of \eqref{c1e} and \eqref{c1ee} as $k\to 1^-$. Applying Theorem \ref{th2}, we have the following result.

\begin{theorem}\label{th2e}
Suppose \eqref{c2ee} and that there is an $\al_0\in\R$ such that
$$
\begin{array}{c}
\dis \int_{-\infty}^\infty\ga(t+\al_0)\sech(2ct)dt=0\\ \dis \quad\text{and}\quad \int_{-\infty}^\infty\dot\ga(t+\al_0)\sech(2ct)dt\ne0.
\end{array}
$$
Then there is a $\de>0$ such that for any $0\ne \ep\in (-\de,\de)$ there are precisely four bounded solutions $(\psi_\pm^\pm,z_\pm^\pm)$ of \eqref{e1e} with
$$
\begin{array}{c}
\dis (\psi_\pm^\pm(t),z_\pm^\pm(t))\\ \dis =\(\pi\pm2\arctan\(\sinh(2c(t-\al_0))\),\pm\sqrt{\frac{-d}{2}}\)+O(\ep).
\end{array}
$$
\end{theorem}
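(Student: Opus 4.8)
The plan is to read Theorem~\ref{th2e} as the concrete specialization of the general heteroclinic result, Theorem~\ref{th2}, to the dimer equation \eqref{e1e}, for which $f(\psi)=4c^2\sin\psi$, $F(\psi)=4c^2(1-\cos\psi)$, $h(z)=4z$, $H(z)=2z^2$ and $J=]0,\infty[$, the diffeomorphism $(C_1)$ being taken on either $I=]0,\infty[$ or $I=]-\infty,0[$. First I would check the structural hypotheses $(H_4)$ and $(H_5)$. For $(H_4)$: the equilibria $\psi=0$ and $\psi=2\pi$ of $\ddot\psi=4c^2\sin\psi$ are hyperbolic saddles, since the linearization $\ddot\psi=4c^2\psi$ there has eigenvalues $\pm 2c\ne 0$, while the orbits $\psi_{c,\pm}(t)=\pi\pm 2\arctan(\sinh(2ct))$ satisfy $\psi_{c,\pm}(\mp\infty)=0$ and $\psi_{c,\pm}(\pm\infty)=2\pi$, hence form a heteroclinic cycle joining these saddles. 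The energy identity $\frac{\dot\psi_{c,\pm}^2}{2}+4c^2(\cos\psi_{c,\pm}-1)=0$ gives $c_\infty=0$, so $(H_5)$ reads $c_\infty-d=-d\in J$, i.e.\ $d<0$, which is exactly \eqref{c2ee}.

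Next I would invoke Theorem~\ref{th2}. For a fixed choice of the branch $I$ (equivalently the sign $\pm$ in the reduced equation \eqref{e1ce}), it supplies the hyperbolic periodic solutions $u_\pm$ near the two saddles and, whenever $M_\infty(\al)=\int_{-\infty}^\infty\dot\psi_\infty(t+\al)\ga(t)\,dt$ has a simple zero, a unique bifurcating bounded solution. Substituting $\dot\psi_{c,\pm}(t)=\pm 4c\,\sech(2ct)$ and changing variables $t\mapsto t-\al$ reduces $M_\infty(\al)=0$, after relabelling the shift as $-\al_0$, to the first stated identity $\int_{-\infty}^\infty\ga(t+\al_0)\sech(2ct)\,dt=0$; differentiating under the integral sign identifies $M_\infty'$ with $\int_{-\infty}^\infty\dot\ga(t+\al_0)\sech(2ct)\,dt$ up to a nonzero constant, so nonvanishing of the second integral is precisely the simple-zero requirement. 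Crucially, the overall sign coming from the heteroclinic branch and from the branch of $H^{-1}$ merely multiplies $M_\infty$ by a nonzero factor $h(H^{-1}(-d))=\pm 2\sqrt{-2d}\ne 0$ and so leaves its zeros and their simplicity untouched: the same pair of scalar conditions governs every case.

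The multiplicity count is the conceptual core. There are two independent binary choices: the branch $I=]0,\infty[$ versus $]-\infty,0[$ of $H^{-1}$, which fixes the sign of $z$, and the branch $\psi_{c,+}$ versus $\psi_{c,-}$ of the heteroclinic cycle. Each of the $2\times 2=4$ combinations yields, by Theorem~\ref{th2}, exactly one bounded solution that is $O(\ep)$-close to the corresponding unperturbed orbit. Along the unperturbed heteroclinic one has $\frac{\dot\psi^2}{2}-F(\psi)=c_\infty=0$, so the recovered $z$-component equals $H^{-1}(-d)=\pm\sqrt{-d/2}$, producing exactly the four configurations $(\psi_\pm^\pm,z_\pm^\pm)=(\pi\pm 2\arctan(\sinh(2c(t-\al_0))),\,\pm\sqrt{-d/2})+O(\ep)$ claimed in the statement. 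Since $d<0$ makes $\pm\sqrt{-d/2}$ two distinct values and the two $\psi$-branches are distinct, the four solutions are pairwise distinct, and the local uniqueness in Theorem~\ref{th2} upgrades ``at least four'' to ``precisely four.''

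I expect the main obstacle to be not any single estimate but the careful bookkeeping that ties the two stated scalar conditions, together with their sign-independence, to the single simple-zero hypothesis of Theorem~\ref{th2} across all four branches, and the verification that each reduced bounded orbit lifts back to a genuine bounded solution of the full system \eqref{e1e}. The latter is guaranteed because the recovered $z(t)$ stays near the nonzero constant $\pm\sqrt{-d/2}$ and hence remains inside the chosen interval $I$, where $(C_1)$ and the reduction \eqref{e1be} are valid for all $t$; this persistence of $\frac{\dot\psi^2}{2}-F(\psi)-d\in J$ along the whole orbit is exactly what $(H_5)$, i.e.\ \eqref{c2ee}, secures.
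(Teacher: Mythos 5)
Your proposal is correct and follows essentially the same route as the paper: the paper proves Theorem \ref{th2e} precisely by verifying $(H_4)$ and $(H_5)$ for $f(\psi)=4c^2\sin\psi$, $h(z)=4z$ (so that $c_\infty=0$ and $(H_5)$ becomes $d<0$, i.e.\ \eqref{c2ee}), substituting $\dot\psi_{c,\pm}(t)=\pm4c\sech(2ct)$ into the Melnikov function of Theorem \ref{th2}, and counting the $2\times2$ combinations of heteroclinic branch and sign of $H^{-1}$ to get the four bounded solutions. Your additional remarks on the nonzero constant factor $h(H^{-1}(-d))$ and on lifting the reduced orbit back to \eqref{e1e} are consistent with the paper's setup and fill in details the paper leaves implicit.
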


Clearly, Remarks \ref{rr1} and \ref{rr2} can be applied to \eqref{e1e}. To apply Remark \ref{rr3}, we shift \eqref{e1ce} by $\psi(t)=\pi+\vphi(t)$ to get
\begin{equation}\label{e1ee}
\begin{array}{l}
\dis \dot\vphi=\th,\\
\dis \dot\th=4\(-c^2\sin\vphi\pm\ep\ga(t)\sqrt{\frac{\th^2}{4}-2c^2(\cos\vphi+1)-\frac{d}{2}}\).
\end{array}
\end{equation}
The unperturbed equation of \eqref{e1ee} when $\ep=0$ is just the pendulum equation
$$
\begin{array}{l}
\dis \dot\vphi=\th,\\
\dis \dot\th=-4c^2\sin\vphi,
\end{array}
$$
which possesses well-known families of solutions \cite[pp. 115-117]{L} mentioned above, and so Remark \ref{rr3} can be used to \eqref{e1ee}.

Now we suppose in \eqref{e1e} that $c\to0$ as $\ep\to0$, so we consider
\begin{equation}\label{e13e}
\begin{array}{l}
\dis \ddot \psi=4\(\ep \wt c(\ep)\sin\psi+\ep\ga(t)z\)\\
\dis \dot z=\ep\ga(t)\dot\psi
\end{array}
\end{equation}
where $\ep>0$ is small and $\wt c(\ep)>0$ is $C^2$-smooth around $\ep=0$. Then \eqref{e13e} has a form of \eqref{e13}. Assumption \eqref{c3} reads now
\begin{equation}\label{c3e}
d<0.
\end{equation}
The Melnikov function \eqref{e16} has now the form
\begin{equation}\label{e16e}
M_{aver}(\al)=\wt c(0)\sin \al\pm\sqrt{\frac{-d}{2}}\int_0^1\ga(t)dt.
\end{equation}
By applying Theorem \ref{thaver} we get the next.

\begin{theorem}\label{thaver2}
Assume that
\begin{equation}\label{c4e}
\wt c(0)>0,\quad \sqrt{\frac{-d}{2}}\left|\int_0^1\ga(t)dt\right|<\wt c(0),
\end{equation}
then there are exactly four $1-$periodic solutions of \eqref{e1e} with $c^2=\ep \wt c(\ep)$ and $\ep>0$ small.
\end{theorem}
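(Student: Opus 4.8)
The plan is to recognize \eqref{e13e} as a special case of the small-system \eqref{e13} and then to apply the averaging result Theorem~\ref{thaver} separately on the two branches of $H^{-1}$. First I would match the data: \eqref{e13e} is \eqref{e13} with $f(\psi,\ep)=4\wt c(\ep)\sin\psi$ and $h(z)=4z$, so that $H(z)=2z^2$ maps each of the intervals $I=\,]0,\infty[$ and $I=\,]-\infty,0[$ diffeomorphically onto $J=\,]0,\infty[$, with the two branches $H^{-1}(w)=\pm\sqrt{w/2}$. Assumption \eqref{c3e}, namely $d<0$, yields $-d\in J$, which is exactly condition \eqref{c3} and makes $H^{-1}(-d)=\pm\sqrt{-d/2}$ well defined on the respective branch.

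Next I would compute the averaged Melnikov function \eqref{e16} on each branch. Since $f(\al,0)=4\wt c(0)\sin\al$ and $h(H^{-1}(-d))=4H^{-1}(-d)=\pm4\sqrt{-d/2}$, on the branch $I=\,]0,\infty[$ (respectively $I=\,]-\infty,0[$) one obtains, up to the harmless positive factor $4$, the function \eqref{e16e} with the upper (respectively lower) sign,
$$
M_{aver}^\pm(\al)=\wt c(0)\sin\al\pm\sqrt{-d/2}\int_0^1\ga(t)\,dt.
$$
Then I would count the simple zeros. Writing $M_{aver}^\pm(\al)=0$ as $\sin\al=\mp\sqrt{-d/2}\(\int_0^1\ga(t)\,dt\)/\wt c(0)$, the hypothesis \eqref{c4e} states precisely that the right-hand side has modulus strictly less than $1$, so for each sign there are exactly two solutions $\al$ in $[0,2\pi)$. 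These zeros are simple because $\frac{d}{d\al}M_{aver}^\pm(\al)=\wt c(0)\cos\al\neq0$ whenever $\sin\al\neq\pm1$, which holds at each zero since $|\sin\al|<1$ there. By Theorem~\ref{thaver}, each simple zero $\al_0$ furnishes, for all small $\ep>0$, a unique $1$-periodic solution $\psi(t)=\al_0+O(\ep)$ of the reduced equation \eqref{e14}, hence via \eqref{e1b} a $1$-periodic solution $(\psi,z)$ of \eqref{e13e} with $z$ of the sign fixed by the branch. The two branches give solutions with $z>0$ and $z<0$ (note $z\neq0$, since the argument of $H^{-1}$ lies in $J$), and within each branch the two zeros give distinct $\psi\pmod{2\pi}$, so the four solutions are pairwise distinct.

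The point requiring care is the word \emph{exactly}. Theorem~\ref{thaver} only asserts uniqueness in a neighborhood of each $\al_0$, so by itself it yields at least four solutions. To exclude others I would invoke the a~priori estimate established just before Theorem~\ref{thaver}: any uniformly bounded $1$-periodic solution of \eqref{e14} satisfies $\|\dot\psi\|_0=O(\ep)$, so $\psi$ is $O(\ep)$-close to a constant $\al$, and the Lyapunov--Schmidt bifurcation equation reduces to $M_{aver}^\pm(\al)+O(\ep)=0$. Since $M_{aver}^\pm$ has only simple, hence isolated, zeros and is bounded away from $0$ off small neighborhoods of them, the $O(\ep)$ perturbation neither destroys an existing zero nor creates a new one; thus on each branch the count is exactly two, and summing over the two branches (which are disjoint, being separated by the sign of $z$) gives exactly four $1$-periodic solutions of \eqref{e1e} with $c^2=\ep\wt c(\ep)$. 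This completeness argument, rather than the existence part, is the main obstacle.
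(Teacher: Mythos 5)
Your proposal is correct and follows essentially the same route as the paper: identify \eqref{e13e} as an instance of \eqref{e13}, compute \eqref{e16e} on the two branches $H^{-1}(w)=\pm\sqrt{w/2}$, and observe that \eqref{c4e} forces two simple zeros of $M_{aver}$ in $[0,2\pi)$ per branch, so Theorem~\ref{thaver} yields the four solutions. The paper's proof is just this one-line zero count; your additional discussion of why the count is \emph{exactly} four (via the a priori bound $\|\dot\psi\|_0=O(\ep)$ and the Lyapunov--Schmidt reduction preceding Theorem~\ref{thaver}) makes explicit what the paper leaves implicit, but it is the same argument, not a different one.
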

\begin{proof}
Condition \eqref{c4e} gives that \eqref{e16e} has 2 simple roots over $[0,2\pi)$. The proof is finished.
\end{proof}

On the other hand, if
$$
\sqrt{\frac{-d}{2}}\left|\int_0^1\ga(t)dt\right|>\wt c(0),
$$
such \eqref{e1e} has no $1-$periodic solutions for $\ep>0$ small.

In the last part of this section, we apply results of Section \ref{for} to \eqref{ee1ee} when $\ga(t)$ is $1-$periodic and continuous, and $c>0$. First, assumptions i), ii) of Lemma \ref{lem2.1} hold with $M=8c^2$ and $g(z)=2z^2$. Since $\int^\infty \frac{dz}{2z}=\infty$, assumptions iii) and iv) of Lemma \ref{lem2.1} are satisfied provided
\begin{equation}\label{ccc}
d<-8c^2.
\end{equation}
We already verified $(C_1)$,  $(C_2)$, and $(A_1)$ holds as well. Since now $J=]0,\infty[$, it follows from Remark \ref{rr4} that ($A_3$) and ($A'_3$) are satisfied when \eqref{ccc} holds.

Summarizing, we see that results of Section \ref{for} can be used to \eqref{ee1ee} under assumption \eqref{ccc} and $\ga(t)$ satisfies either $(A_2)$ or $(A_4)$, respectively. On the other hand, we need
\begin{equation}\label{ccc2}
\frac{\dot\psi^2}{4}\ge2c^2(1-\cos\psi)+\frac{d}{2}.
\end{equation}
Since
$$
\frac{\dot\psi^2}{4}\ge2c^2(1-\cos\psi)+\frac{d}{2}\ge \frac{d}{2},
$$
we see that if
$$
d>0
$$
then $\dot\psi(t)\ne0$, so $\psi(t)$ can be neither periodic nor homoclinic nor heteroclinic. If $d<0$, then by \eqref{c1e} and \eqref{c1ee}, \eqref{ee1ee} may have oscillating solutions for suitable $\ga(t)$. If $d=0$, then condition $\psi(t_0)=0$ and \eqref{ccc2} give $\cos\psi(t_0)=1$, and this situation occurs when $\ga(t)=0$, for instance.

\section{Numerical results}\label{numer}

\begin{figure}[tb!]
\begin{center}
{\includegraphics[width=9cm]{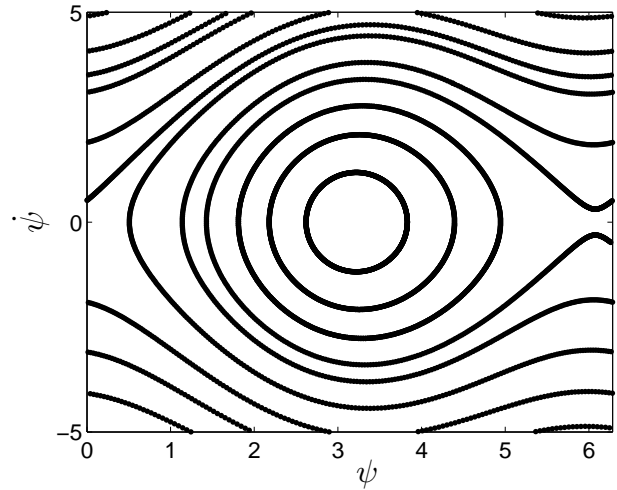}}
\end{center}
\caption{Phase portrait of equation \eqref{ee1ee} with $\gamma=0.01$. }
\label{fig1}
\end{figure}

\begin{figure}[tbhp!]
\begin{center}
\subfigure[$\epsilon=0.2$]{\includegraphics[width=9cm]{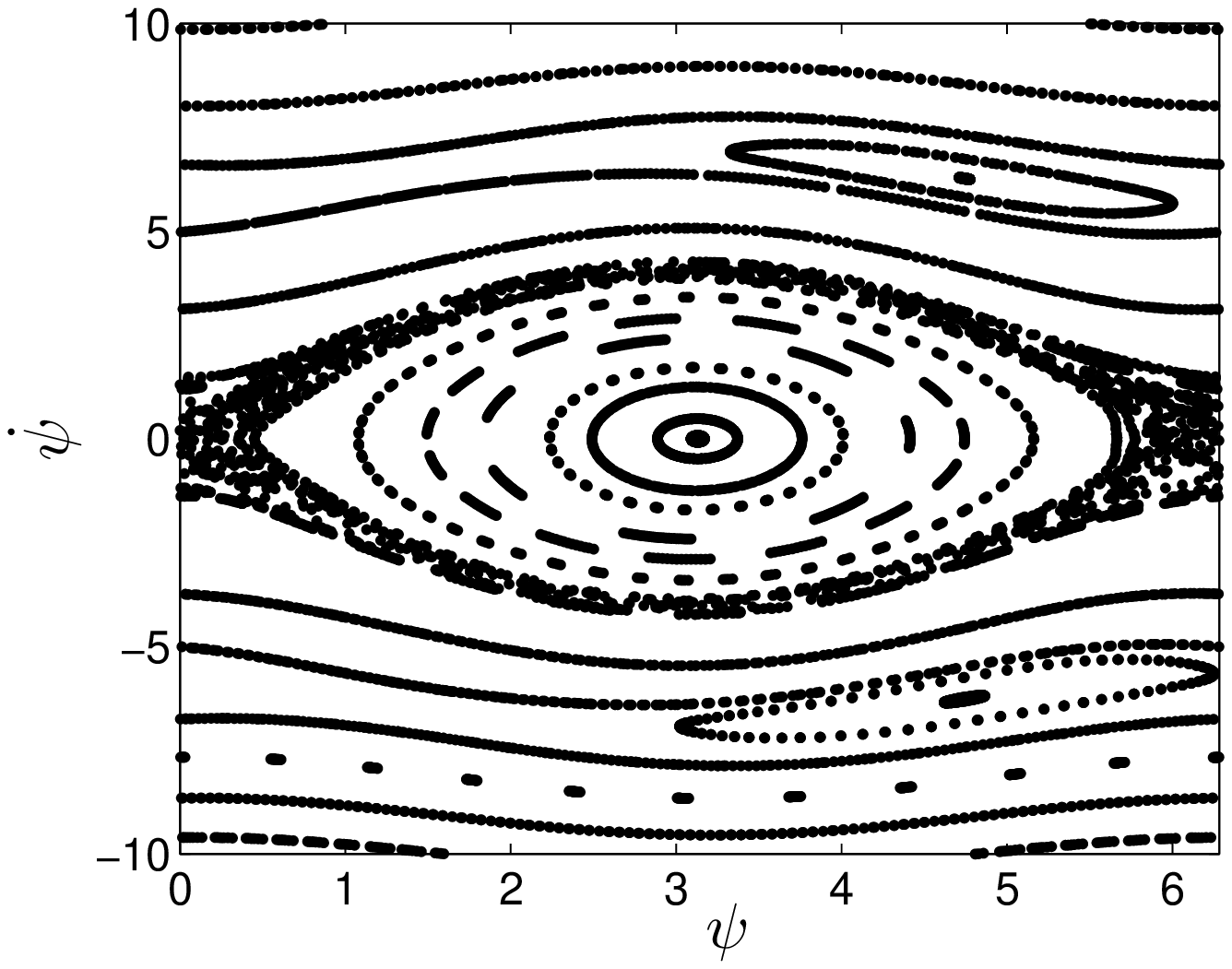}}\\
\subfigure[$\epsilon=2.0$]{\includegraphics[width=9cm]{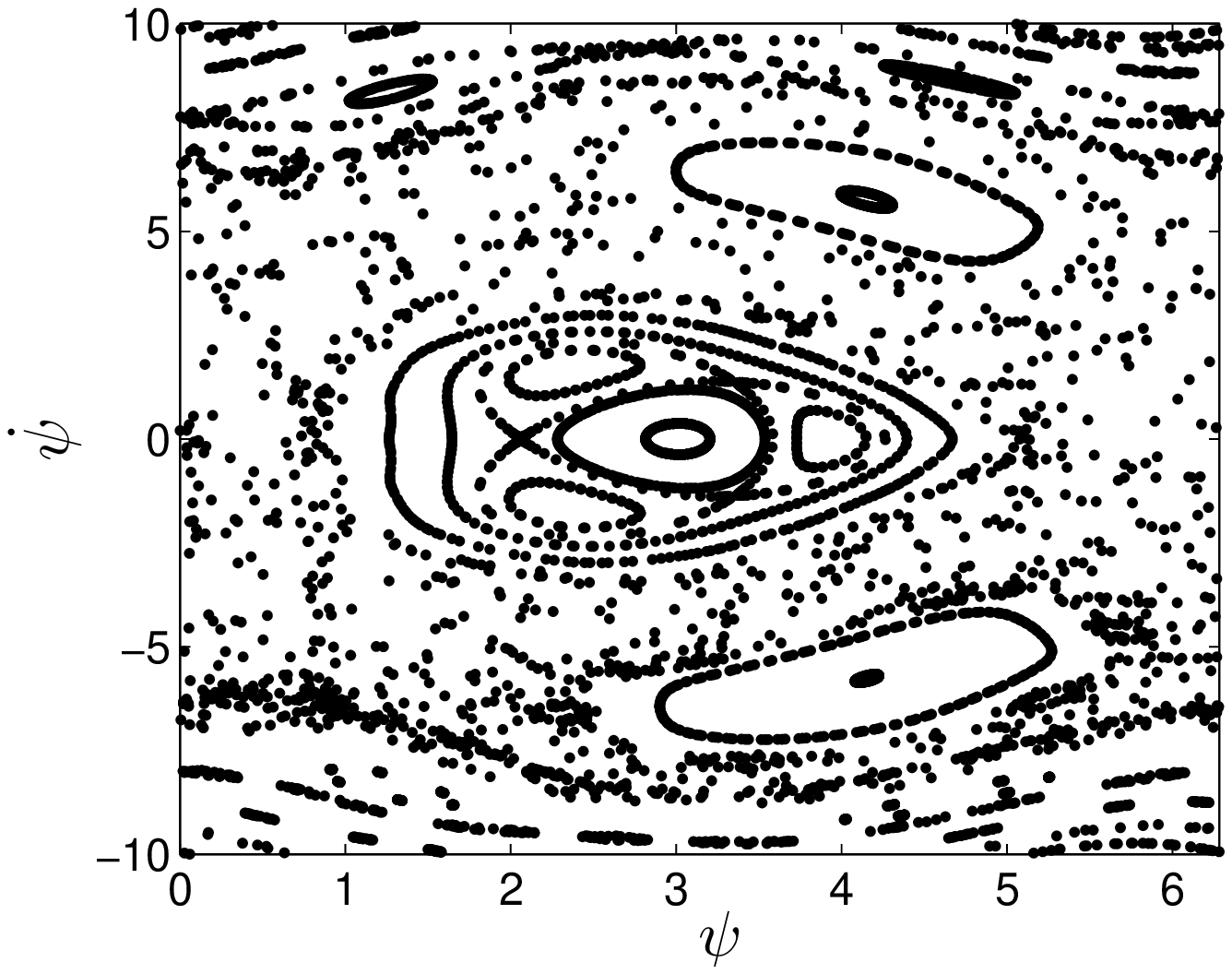}}
\end{center}
\caption{Stroboscopic plots of the system \eqref{ee1ee} with \eqref{gt1} for two values of $\epsilon$.}
\label{fig2}
\end{figure}

\begin{figure}[tbhp!]
\begin{center}
\subfigure[$\epsilon=0.2$]{\includegraphics[width=9cm]{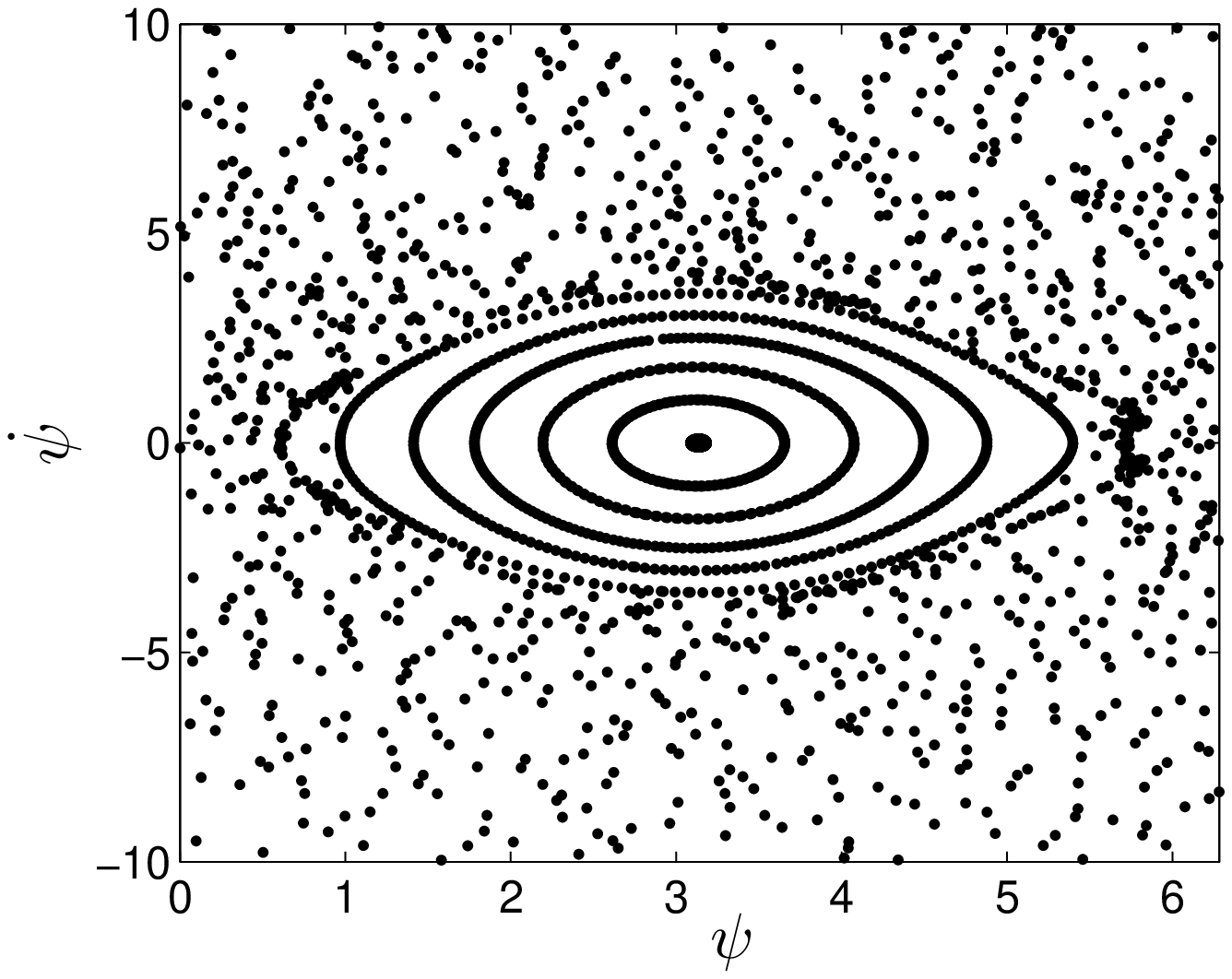}}\\
\subfigure[$\epsilon=0.5$]{\includegraphics[width=9cm]{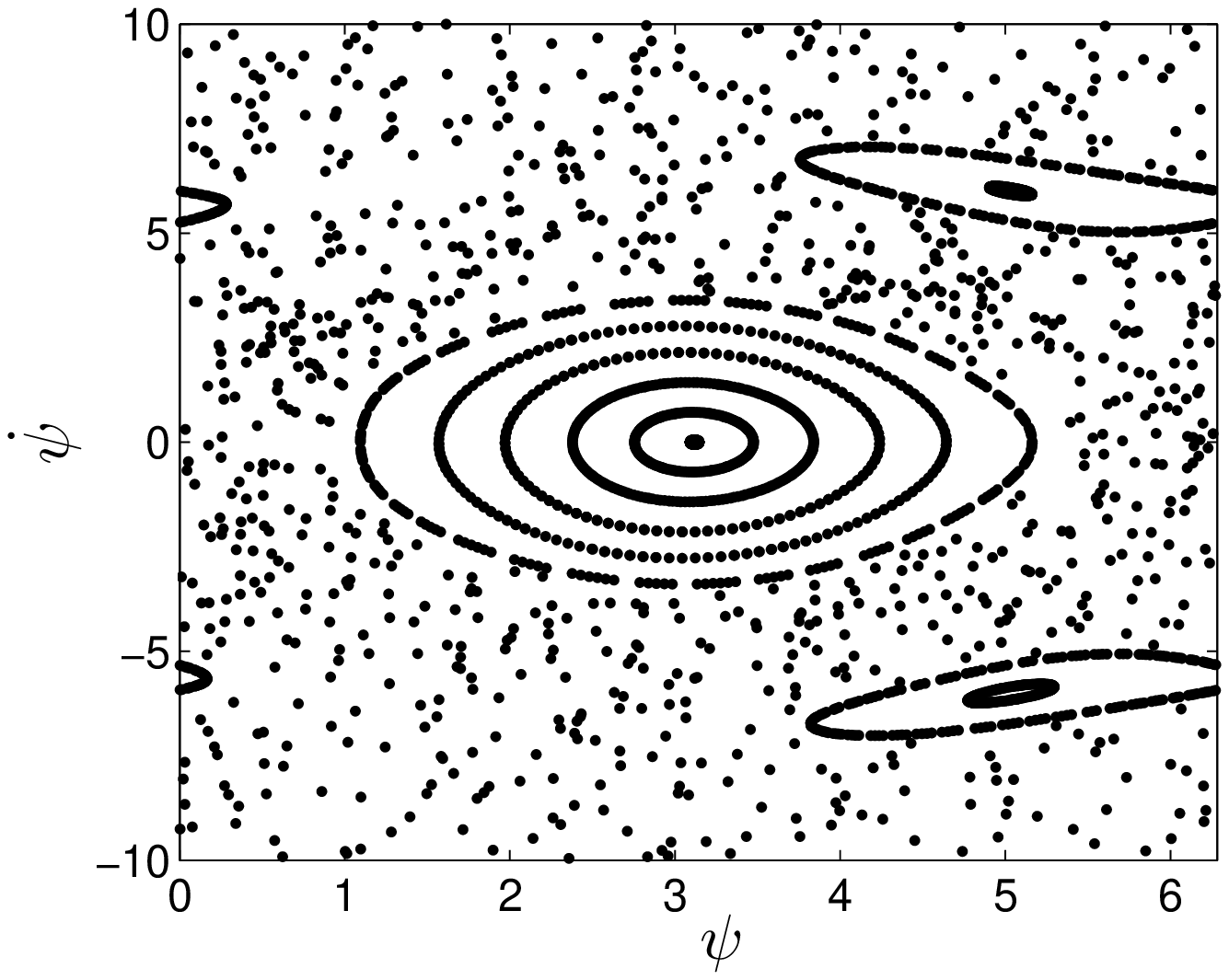}}
\end{center}
\caption{The same as Fig.\ \ref{fig2}, but for $\gamma(t)$ defined in \eqref{gt2}.}
\label{fig3}
\end{figure}

\begin{figure}[tbhp!]
\begin{center}
\subfigure[$c=0.05$]{\includegraphics[width=9cm]{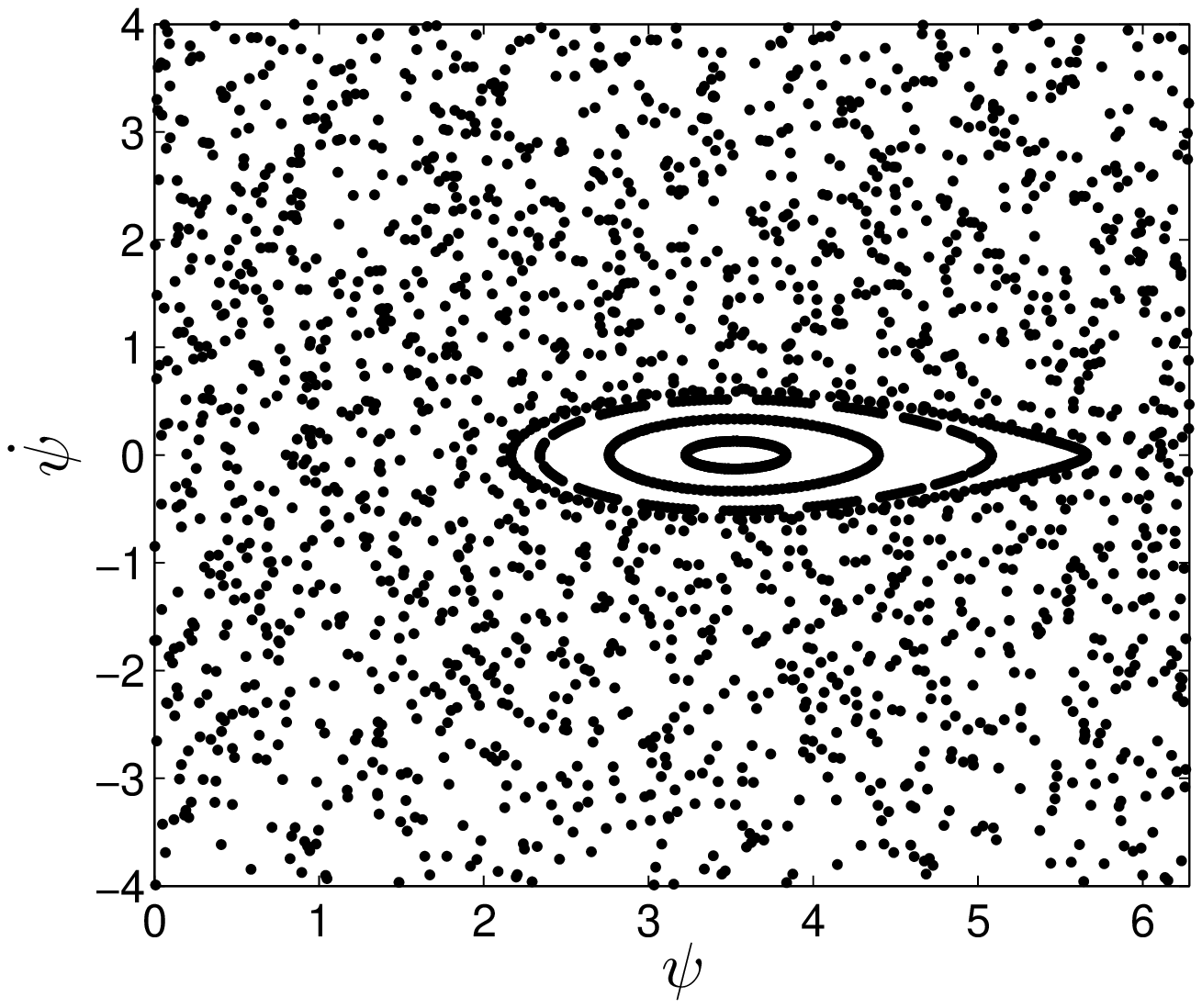}}\\
\subfigure[$c=0.02$]{\includegraphics[width=9cm]{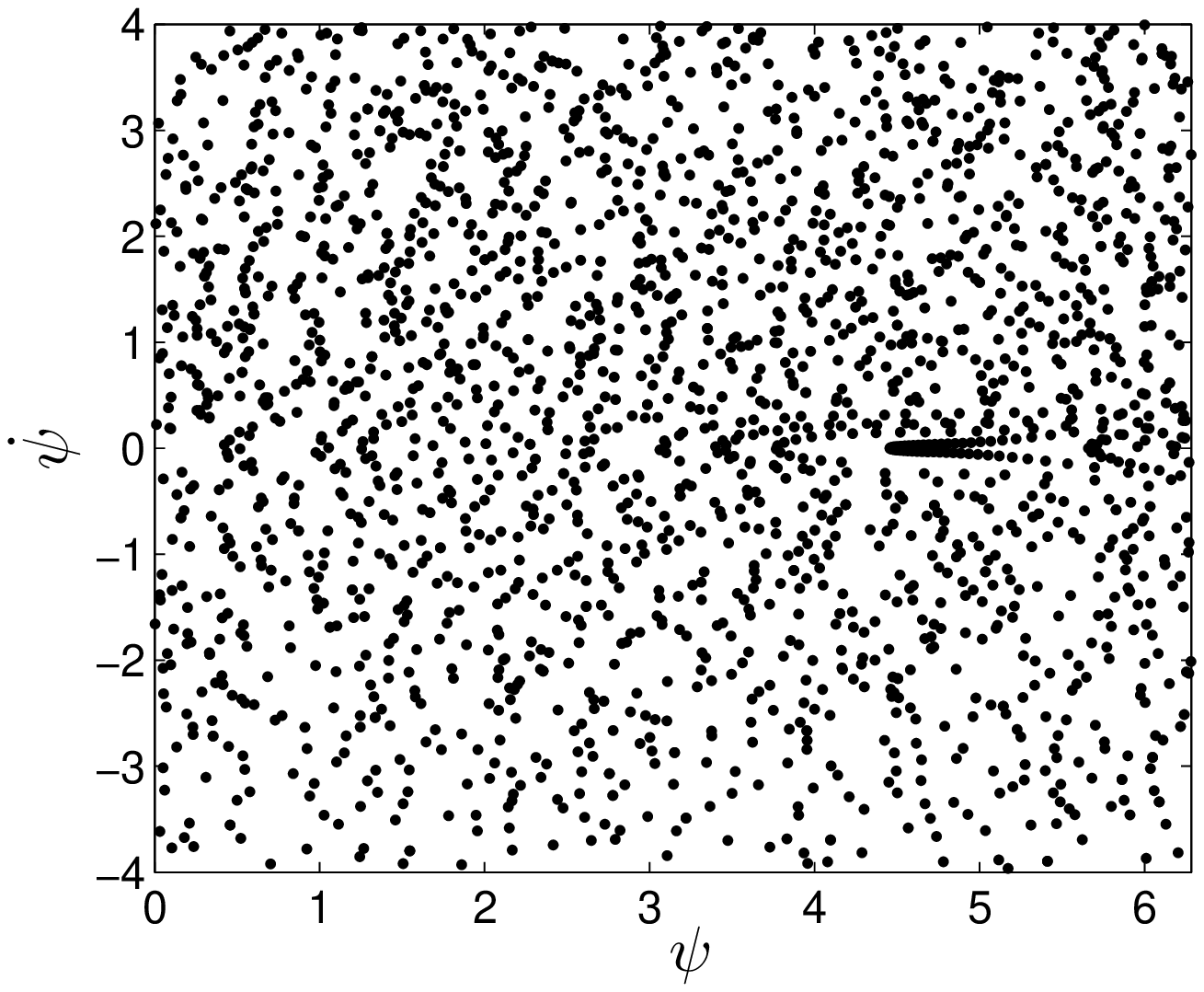}}
\end{center}
\caption{The same as Fig.\ \ref{fig3}, but for (a) $c=0.05$ and (b) $c=0.02$.}
\label{fig4}
\end{figure}

To illustrate the analytical results reported in the previous sections, we integrate the governing equation \eqref{ee1ee} numerically using a multistep Adam-Bashforth-Moulton method. The initial conditions $\psi(0)$ and $\dot{\psi}(0)$ are chosen independently, while $z(0)$ satisfies \eqref{e1be} for a fixed $d\in\mathbb{R}$, such that the solution curves of \eqref{ee1ee} lie on the three-dimensional surface \eqref{tamb3}. The solution trajectories can therefore be conveniently plotted on the, e.g., $(\psi,\dot{\psi})-$plane (cf.\ \eqref{e1c}, \eqref{e1ce}). Throughout the section, we set $d=-8.5$.

First, we consider $c=1$. In the absence of gain-loss parameter, i.e.\ $\gamma=0$, the governing equation reduces to the well-known pendulum equation that has two families of solutions, namely the periodic (libration) and the $p-$shift periodic (rotation) solutions. The solution families are separated by heteroclinic manifolds. In Fig.\ \ref{fig1} the phase portrait of \eqref{ee1ee} is plotted for a small value of constant $\gamma$. In the presence of a non-vanishing gain-loss parameter, while periodic solutions persist, there are no longer $p-$shift periodic solutions. Instead of heteroclinic trajectories, one has homoclinic connections with unbounded solutions outside the invariant manifolds. By increasing the gain-loss parameter $\gamma$, there will be a critical value $\gamma_c$ above which no periodic solutions exist and all solutions become unbounded. This region is referred to as the PT-broken phase regime and the transition has been completely discussed in \cite{P}. For the parameters above, $\gamma_c\approx0.79$.

When $\gamma(t)$ is a periodic function of time, rather than showing the phase portraits with a continuous time, it becomes convenient to represent the solution trajectories in Poincar\'e maps (stroboscopic plots at every period $T=1$). In recurrence maps, a stable periodic orbit will correspond to an elliptical fixed point encircled by closed regions (islands).

As a particular choice, we consider
\begin{equation}
\gamma(t)=\epsilon\cos(2\pi t).
\label{gt1}
\end{equation}
The cosine function is taken to respect the PT-re\-vers\-ibil\-i\-ty of the governing equation. Plotted in Fig.\ \ref{fig2} is the Poincar\'e map of \eqref{ee1ee} with \eqref{gt1}, obtained from various sets of initial conditions using direct numerical integrations of the governing equations. If Fig.\ \ref{fig1} is an ordered dynamical picture, Fig.\ \ref{fig2} have visible chaotic (stochastic) regions.

Fig. \ref{fig2}(a) corresponds to a relatively small value of $\epsilon$. The chaotic layer in the figure is caused by the presence of heteroclinic manifolds connecting $(\psi,\dot{\psi})=(0,0)$ and $(2\pi,0)$, i.e.\ separatrix chaos, as proven in Theorem \ref{th2e}  (see also Remark \ref{rr1}).

Within the region bounded by the chaotic layer, the presence of closed islands is observable. The centers of the islands correspond to stable $p-$periodic solutions shown to persist in Theorem \ref{th1e}. Beyond the chaotic layer, there are also islands with centers corresponding to stable $p-$shift periodic solutions, i.e.\ bounded running modes with period $p$, discussed in Theorem \ref{th1ee}, which do not exist in the system with constant $\ga$.

We have increased the strength of the gain-loss coefficient $\epsilon$. Depicted in Fig.\ \ref{fig2}(b) is the Poincar\'e map for $\epsilon=2$. Despite the significant expansion of the chaotic region, one can observe similar features as in Fig.\ \ref{fig2}(a). The presence of periodic solutions and $p-$shift periodic solutions are guaranteed respectively by Theorems \ref{thm2.2} (and \ref{thm2.4}) and \ref{thm2.3} as the gain-loss function \eqref{gt1} satisfies the condition $(A2)$ and $(A4)$ under the translation $t\to (t-1/4)$. Note as well that the gain-loss strength $\epsilon$ is beyond the critical value of PT-broken phase for constant $\ga$. Hence, similarly to the result reported in \cite{horn13}, we prove and observe a controllable expansion of the exact PT-symmetry region.

On the interesting persistence of $p-$shift periodic solutions observed in Fig.\ \ref{fig2}, it may be unsurprising as the parametric drive \eqref{gt1} has zero average. Nevertheless, the sufficient condition in Theorem \ref{th1ee} does not require a zero-averaged $\ga$. To illustrate it, next we consider the gain-loss function
\begin{equation}
\ga(t)=10^{-2}+\epsilon\cos(2\pi t).
\label{gt2}
\end{equation}
When $\epsilon=0$, one will obtain the phase-portrait in Fig.\ \ref{fig1}.

Shown in Fig.\ \ref{fig3} are the phase-portraits of the system with the gain-loss function \eqref{gt2} for two values of $\epsilon$. In both cases, periodic solutions exist. For $\epsilon=0.2$ in panel (a), we have a similar plot as in Fig.\ \ref{fig1}, where there is no bounded $p-$shift periodic solution. Interestingly, when $\epsilon$ is large enough such that the conditions in Theorem \ref{th1ee} can be satisfied, we observe islands in the region of unbounded solutions, see panel (b). Hence, we obtain (stable) bounded rotating solutions as expected from the theorem.

Finally, we consider the case when both the gain-loss strength and the parameter $c$ are small. We present in Fig.\ \ref{fig4} the Poincar\'e maps of the system for two small values of $c$ and the gain-loss function \eqref{gt2} with $\epsilon=0.2$.

Note that using Theorem \ref{thaver2} (cf.\ Theorem \ref{thaver}), periodic solutions exist if and only if the parameter $c$ exceeds the threshold value
\begin{equation}
\sqrt{\frac{-d}2}\left|\int_0^1\gamma(t)\,dt\right|\approx0.0206.
\label{ms}
\end{equation}
In panel (a) of Fig.\ \ref{fig4}, $c$ is above the value and the presence of periodic solutions can be clearly seen. In panel (b), we decrease the value of $c$ below \eqref{ms} and in good agreement with the theorem there is no more periodic solution. All solutions become unbounded. Therefore, Theorem \ref{thaver2} provides an approximate threshold value for PT-broken phase.

\section*{Acknowledgement}
FB is partially supported by PRIN-MURST {\it Equazioni Differenziali Ordinarie e Applicazioni}. JD is partially supported by Grant GA\v CR P201/11/0768. MF is partially supported by Grant VEGA-MS 1/0071/14. MP is supported by project No.\ CZ.1.07/2.3.00/30.0005 funded by European Regional Development Fund.

\end{document}